\documentclass[aps,pra,twocolumn,amsmath,amssymb,nofootinbib,showpacs,superscriptaddress]{revtex4-2}
\usepackage[english]{babel}
\usepackage{latexsym}
\usepackage{graphicx}
\usepackage{epstopdf}
\usepackage{epsfig}
\usepackage{color}
\usepackage{bm}
\usepackage{amsmath}
\usepackage{amssymb}
\usepackage{fancyhdr}
\usepackage{amsthm}
\usepackage{dcolumn}
\usepackage{float}
\usepackage{hyperref}
\usepackage{color}
\usepackage{cleveref}
\usepackage[svgnames,dvipsnames]{xcolor}
\usepackage{enumerate}
\usepackage{braket}
\usepackage{bbm}

\hypersetup{hidelinks,colorlinks=true,allcolors=DarkBlue}

\usepackage{algorithm}
\usepackage{algpseudocode}
\algrenewcommand\algorithmicrequire{\textbf{Input:}}
\algrenewcommand\algorithmicensure{\textbf{Output:}}

\usepackage[svgnames,dvipsnames]{xcolor}

\usepackage{tikz}
\usetikzlibrary{arrows,decorations.text,arrows.meta,
decorations.pathmorphing,backgrounds,positioning,fit}

\DeclareMathOperator{\Tr}{Tr}

\newtheorem{theorem}{Theorem}
\newtheorem{lemma}{Lemma}
\newtheorem{proposition}{Proposition}
\newtheorem{corollary}{Corollary}

\theoremstyle{remark}
\newtheorem{remark}{Remark}

\begin{document} 

\title{Bounds on quantum conference key agreement in pair-entangled networks}
\author{Justus Neumann}
\author{Hermann Kampermann}
\author{Dagmar Bruß}

\affiliation{Heinrich Heine University D\"{u}sseldorf,
Faculty of Mathematics and Natural Sciences, Institute for Theoretical Physics III, 
Universit\"{a}tsstr. 1, D\"{u}sseldorf 40225, Germany}
\author{Anton Trushechkin}
\email{anton.trushechkin@hhu.de}
\affiliation{Heinrich Heine University D\"{u}sseldorf,
Faculty of Mathematics and Natural Sciences, Institute for Theoretical Physics III, 
Universit\"{a}tsstr. 1, D\"{u}sseldorf 40225, Germany}
\affiliation{Technical University of Braunschweig,  Faculty of Electrical Engineering, Information Technology,
and Physics, Institute for Communications Technology, Schleinitzstr. 22,  Braunschweig 38106, Germany}

\date{\today}

\begin{abstract}

We investigate the task of conference key agreement in near-term quantum networks, where the nodes are connected by sources of bipartite entangled states, under the class of local operations not requiring quantum memory. We derive upper bounds on the distillable conference key depending on the network topology and degree of entanglement of the sources, and prove tightness of these bounds for some particular cases. In these cases, we show that pairwise bipartite key distillation followed by merging the bipartite keys into the conference key is optimal.

\end{abstract}

\maketitle 

\section{Introduction}

Quantum networks and even the ``quantum internet'' are believed to be a part of future quantum communication infrastructure \cite{Kimble,QIntenet-Pirandola,QIntenet-Simon,QIntenet-Wehner,QInternet-book,QInternet2025,QPhotNet2025,QInternet2025Tech}. Secure distribution of cryptographic keys is supposed to be one of the earliest applications of quantum networks.

An important network cryptographic task is conference key agreement (CKA), i.e., establishing a common secret key for all participants in the network or a subset of them. Quantum conference key agreement (QCKA) has recently attracted attention  \cite{Shi2013MultiPartyQKA, Shukla2014BellQKA, Epping2017, Murta2020Review,  Grasselli2021AnonymousCKA}. One way of doing QCKA is distributing multipartite entangled states, e.g., GHZ states. It has been shown that directly utilizing multipartite entangled states (e.g. GHZ states) can improve the rate of conference key generation compared to pairwise schemes \cite{Epping2017,Murta2020Review,Pickston2023NetworkQCKA}.
However, the generation of multipartite GHZ states remains technologically challenging. 
Near-term quantum networks will likely be limited to pairwise entanglement, e.g. Bell states, between nodes. In such pair-entangled networks (PENs) \cite{Vicente2022}, the global resource state is composed solely of bipartite entangled states shared among various parties (nodes).

In this paper, we do not consider full QCKA protocols, which assume that the adversary can replace a source of entangled states (or a quantum channel) by their own, but consider the simplified task of \textit{conference key distillation} \cite{DW2005,ChristandlThesis,Augusiak2009,Das2021}, where the quantum state is given and known. The role of the adversary is that they hold the purification of this state (if it is mixed) and listen to the public classical communication.

We assume that users at their respective nodes can perform collective measurements on the subsystems coming from different sources. In the case of photons, this can be achieved using suitable interferometers. But we assume that the users cannot perform general local operations and classical  communications (LOCC), which would require quantum memories. In our scenario, users are not required to store their quantum systems after receiving the distributed quantum states. However, they are supposed to have preshared randomness, which gives rise to  the ``local operations and shared randomness'' (LOSR) scenario. The LOSR network setting, i.e., PEN states followed by LOSR operations, is actively studied \cite{Buscemi2004, Navascues2020, Kraft2021, Hansenne2022, Neumann2025_NoAdvantageLOSR}. 
Additionally, we allow postprocessing with arbitrary classical communication.

Although LOSR networks are severely restricted operationally, they can nevertheless generate nontrivial multipartite correlations. In particular, while the global resource consists only of bipartite states distributed along the edges of a network, the resulting network state may exhibit genuine multipartite entanglement \cite{Navascues2020,Neumann2025_NoAdvantageLOSR}. This observation suggests that there might exist multipartite strategies for QCKA which  outperform 
strategies restricted to biseparable source states \cite{Carrara2021, Wooltorton}.
In this work, we show that this is not the case for this type of network.

The central question we address are
upper bounds on the distillable conference key originating from the network topology and their tightness. Some general upper bounds are known \cite{Das2021,Pirandola2020}. We apply them to PEN networks and show that they reduce to the ``weakest cut''  bound. Roughly speaking, this means that, if the network is decomposed into two weakly connected parts, then the conference key agreement cannot exceed the capacity of this link. This bound is often not tight. We then derive bounds taking into account more general network topologies and bottleneck structures, which are based on arbitrary partitions of the corresponding graph, not just bipartitions.

We show that the protocol based on the spanning-tree packing \cite{PIN2010,PerfectOmni,TrushSpanTrees} allows us to achieve upper bounds for some particular cases, proving the tightness of these bounds in these cases. In contrast to  Ref.~\cite{TrushSpanTrees}, here we focus on upper bounds rather than on explicit protocols. Furthermore, in Ref.~\cite{TrushSpanTrees} only the scenario where the participants have already established a bipartite secret key is considered. Here we consider a more general situation of a network of bipartite entangled states. 

This work is organized as follows. In Section~\ref{SecNotations}, we give the definitions of the distillable conference key for an arbitrary multipartite resource state and the definition of the PEN states. In Section~\ref{SecDW}, we study conference key distillation protocols in which a single party serves as the  reference party.
This setting naturally gives rise to the tripartite generalization of the Devetak--Winter formula \cite{Devetak2005DWRate,augusiak2009multipartite,Epping2017}. We  derive a general  upper bound for the multipartite Devetak-Winter conference key generation rate in the LOSR setting and an upper bound for the tripartite generalization of the BB84 quantum key distribution (QKD) protocol \cite{Grasselli2018}.

It is known that genuine multipartite entanglement is not a necessary condition for QCKA \cite{Carrara2021, Wooltorton}, but upper bounds on the distillable conference key under restricted classes of multipartite states like LOSR preparable states from bipartite sources are unknown.

Starting from Section~\ref{SecCut}, we analyze the general conference key distillation. We discuss the bound based on vertex bipartitions (``weakest cut'') and also prove that the bound from Ref.~\cite{Das2021} applied to PEN states reduces to this bound. Section~\ref{SecPEN} contains general upper bounds. Our conclusions are given in Section~\ref{SecConcl}.

\section{Preliminaries}
\label{SecNotations}

\subsection{Distillable conference key}

Throughout the paper we fix the following conventions:\\
We work with $N$ honest parties $1,2,...,N$. We denote the corresponding quantum systems as $A_1, A_2, \dots, A_N$ and write $\vec A=(A_1,\ldots,A_N)$. 
At the start of every protocol round a source distributes a multipartite state $\rho_{A_1\ldots A_N}\equiv\rho_{\vec A}$, i.e., a density operator in a Hilbert space $\mathcal H_{A_1}\otimes\ldots\otimes\mathcal H_{A_N}\equiv \mathcal H_{\vec A}$.
 For every non-empty subset $I\subset\{1,\ldots,N\}\equiv[N]$ we abbreviate $A_I=(A_i)_{i\in I}$, $\overline I=[N]\backslash I$, $\mathcal H_I=\bigotimes_{i\in I}\mathcal H_{A_i}$, and  
    \begin{equation}    \rho_{A_I}=\Tr_{A_{\overline I}}\rho_{\vec A},\qquad \rho_{A_IE}=\Tr_{A_{\overline I}}\rho_{\vec AE}.
    \end{equation}
Here $\rho_{A_IE}$ is a purification of state $\rho_{A_I}$, i.e., a projector onto a vector in the Hilbert space $\mathcal H_{\vec A}\otimes\mathcal H_E$. The additional party $E$ is called Eve.
We consider the following scenario. First, the participants perform measurements from the LOSR (local operations and shared randomness) class, i.e., measurements  of the form
\begin{equation}
\label{EqPOVMlambda}
    \Big\{
    \sum_{\lambda\in\mathcal L}
    p_\lambda
    M_{x_1}^{(1),\,(\lambda)}
    \otimes
    M_{x_2}^{(2),\,(\lambda)}
    \otimes
    \ldots
    \otimes
    M_{x_N}^{(N),\,(\lambda)}
    \Big|\,
    x_i\in\mathcal X_i
    \Big\},
\end{equation}
where $\{M_{x_i}^{(i),(\lambda)}\}_{x_i\in\mathcal X_i}$, $i\in\{1,\ldots,N\}$, $\lambda\in\mathcal L$ are local positive operator-valued measures (POVM) depending on the parameter $\lambda$ randomly chosen from a finite set $\mathcal L$ according to the probability distribution $\{p_\lambda\}$  (public shared randomness). Here $\mathcal X_i$ are arbitrary finite sets of the outcomes. Then, the participants perform classical postprocessing. We denote this class of quantum channels (CPTP maps) as LOSR+PP.

The distillable conference key for the parties $I$ (following Ref.~\cite{PerfectOmni}, we call them ``secrecy-seeking parties'', while $[N]\backslash I$ are commonly referred to as ``helpers'') is defined as \cite{ChristandlThesis,Augusiak2009,Das2021} 
\begin{multline}
\label{EqKeyRate}
\overline r_{\rm key}(I)=\lim_{\varepsilon\to0}
\sup_{
\begin{smallmatrix}
n,m,\\
\Lambda\in{\rm LOSR+PP}
\end{smallmatrix}
}\Big\{\frac mn\Big|\:\frac12
\big\|\Lambda(\rho_{\vec AE}^{\otimes n})
\\-\rho^{(m), \,{\rm ideal}}_{K_IE^nC}\big\|_1
\leq\varepsilon\Big\},
\end{multline}
where 
$E^n$ is Eve's register originating from $n$ copies of $\rho_{\vec AE}$, the channel $\Lambda$ acts on the parties $\vec A$ (i.e., the action on $E^n$ is trivial), $C$ is the register for the classical communication in the postprocessing part of an LOSR+PP protocol (and also includes the public shared randomness $\lambda$), and
\begin{equation}
\label{EqKeyIdeal}
\begin{split}
\rho^{(m), \,{\rm ideal}}_{K_IE^nC}
&=2^{-m}
\sum_{k\in\{0,1\}^m}
(\ket{k}\bra{k})^{\otimes|I|}
\otimes\rho_{E^nC},\\
\rho_{E^nC}&=\Tr_{K_I}\Lambda(\rho_{\vec AE}^{\otimes n}).
\end{split}
\end{equation}
Here $K_i$, $i\in I$, are registers containing the secret key and associated with the corresponding parties. Each of them corresponds to the $2^m$-dimensional Hilbert space of key   and $\{\ket{k}\}$ is an orthonormal basis in this space. 
That is, in the ideal state (\ref{EqKeyIdeal}), the parties from $I$ share a perfectly classically correlated state uncoupled from Eve and classical communication. In the definition of \eqref{EqKeyRate} we assume that the helpers are trusted: We do not require privacy with respect to them. In classical multiterminal information theory, both problem statements with trusted and untrusted helpers are considered \cite{Csiszar2004}. We will return to this discussion in the end of the paper in Remark~\ref{RemHelpers}.

\subsection{Network model}

We consider a special class of multipartite states, namely the so called pair-entangled network (PEN) states~\cite{Vicente2022}.
Let a connected graph $([N],\mathcal E)$ be given, where $[N]=\{1,\ldots,N\}$ is the set of vertices and $\mathcal E$ is the set of edges. A PEN state has the form
\begin{equation}\label{EqPEN}
    \rho_{\vec A}=\bigotimes_{e\in\mathcal E}\rho_e,
\end{equation}
where $\rho_e$ are (possibly different) bipartite states acting on Hilbert spaces $\mathcal H_e$ corresponding to the edges $e$. An example is depicted in Fig.~\ref{FigPEN}.

\begin{figure}
    \begin{tikzpicture}[node distance=2.5cm]
        
\tikzstyle{vertex}=[circle,draw,inner sep=0pt,minimum size=5mm]
\tikzstyle{mainvertex}=[circle,draw,inner sep=0pt,minimum size=5mm,ultra thick,font=\bfseries,fill=lime]

\node [vertex] (a1) {1};
\node [mainvertex] (a2) [right of=a1] {2};
\node [vertex] (a4) [below of=a1] {4};
\node [mainvertex] (a7) [below of=a4] {7};
\node [vertex] (a8) [right of=a7] {8};
\node [mainvertex] (a5) [below of=a2] {5};
\node [vertex] (a3) at (a1) [xshift=1.25cm, yshift=-1.25cm] {3};
\node [mainvertex] (a6) [below of=a3] {6};

\draw (a1) -- (a2);
\draw (a1) to (a4);        
\draw (a2) to (a3);        
\draw (a2) to (a5);        
\draw (a3) to (a6);        
\draw (a3) -- (a5);
\draw (a4) -- (a6);
\draw (a4) -- (a7);
\draw (a5) to (a8);        
\draw (a6) -- (a7);
\draw (a6) -- (a8);
\draw (a7) -- (a8);


\end{tikzpicture}
    \qquad\qquad\quad
    \begin{tikzpicture}[node distance=2.5cm]
        
\tikzstyle{vertex}=[circle,draw,inner sep=0pt,minimum size=7mm]
\tikzstyle{subvertex}=[circle,draw,inner sep=0pt,minimum size=2mm,thick]

\node [vertex] (a1) {};
\node [vertex] (a2) [right of=a1] {};
\node [vertex] (a4) [below of=a1] {};
\node [vertex] (a7) [below of=a4] {};
\node [vertex] (a8) [right of=a7] {};
\node [vertex] (a5) [below of=a2] {};
\node [vertex] (a3) at (a1) [xshift=1.25cm, yshift=-1.25cm] {};
\node [vertex] (a6) [below of=a3] {};

\node (a11) at (a1) [subvertex,fill=red,xshift=0.14cm,yshift=0.14cm] {};
\node  (a12) at (a1) [subvertex,fill=blue,yshift=-0.2cm] {};

\node (a21) at (a2) [subvertex,fill=red,xshift=-0.14cm,yshift=0.14cm] {};
\node (a22) at (a2)
[subvertex,fill=cyan,xshift=-0.14cm,yshift=-0.14cm] {};
\node (a23) at (a2)
[subvertex,fill=green,xshift=0.14cm,yshift=-0.14cm] {};

\node (a31) at (a3) [subvertex,fill=cyan,xshift=-0.14cm,yshift=0.14cm] {};
\node (a32) at (a3) [subvertex,fill=purple,xshift=0.14cm,yshift=0.14cm] {};
\node (a33) at (a3) [subvertex,fill=orange,xshift=0.14cm,yshift=-0.14cm] {};

\node (a41) at (a4)
[subvertex,fill=blue,yshift=0.2cm] {};
\node (a42) at (a4)
[subvertex,fill=magenta,xshift=0.14cm,yshift=-0.14cm] {};
\node (a43) at (a4)
[subvertex,fill=brown,xshift=-0.14cm,yshift=-0.14cm] {};

\node (a51) at (a5)
[subvertex,fill=green,xshift=0.14cm,yshift=0.14cm] {};
\node (a52) at (a5)
[subvertex,fill=purple,xshift=-0.14cm,yshift=0.14cm] {};
\node (a53) at (a5)
[subvertex,fill=pink,xshift=0.14cm,yshift=-0.14cm] {};

\node (a61) at (a6)
[subvertex,fill=magenta,xshift=-0.14cm,yshift=0.14cm] {};
\node (a62) at (a6)
[subvertex,fill=orange,xshift=0.14cm,yshift=0.14cm] {};
\node (a63) at (a6)
[subvertex,fill=yellow,xshift=0.14cm,yshift=-0.14cm] {};
\node (a64) at (a6)
[subvertex,fill=teal,xshift=-0.14cm,yshift=-0.14cm] {};

\node (a71) at (a7)
[subvertex,fill=brown,xshift=-0.14cm,yshift=0.14cm] {};
\node (a72) at (a7)
[subvertex,fill=teal,xshift=0.14cm,yshift=0.14cm] {};
\node (a73) at (a7)
[subvertex,fill=violet,xshift=0.14cm,yshift=-0.14cm] {};

\node (a81) at (a8)
[subvertex,fill=yellow,xshift=-0.14cm,yshift=0.14cm] {};
\node (a82) at (a8)
[subvertex,fill=pink,xshift=0.14cm,yshift=0.14cm] {};
\node (a83) at (a8)
[subvertex,fill=violet,xshift=-0.14cm,yshift=-0.14cm] {};

\draw (a11) to node [above, inner sep=.8mm] {$\rho_e$} (a21);
\draw (a12) -- (a41);
\draw (a22) -- (a31);
\draw (a23) -- (a51);
\draw (a33) -- (a62);
\draw (a32) -- (a52);
\draw (a42) -- (a61);
\draw (a43) -- (a71);
\draw (a53) -- (a82);
\draw (a64) -- (a72);
\draw (a63) -- (a81);
\draw (a73) -- (a83);

\end{tikzpicture}
    \caption{Left: Example for the graph of a network. The vertices (or nodes) correspond to participants and the edges correspond to bipartite entangled states between them. Vertices from the subset $I$ of participants who want to establish a common conference key (``secrecy-seeking parties'') are depicted in green. The other participants can assist them: Here the vertices 5 and 6 are connected only via the vertices outside $I$. Right: more detailed structure of the network and a PEN state. The ``total'' Hilbert space is the tensor product of the spaces corresponding to the participants. The Hilbert spaces of the participants are the tensor products of their corresponding parts of the bipartite states.
    }
    \label{FigPEN}
\end{figure}

A pure PEN state has the form 
\begin{equation}
\label{EqPENpure}
    \rho_{\vec A}=
    \bigotimes_{e\in\mathcal E}
    \ket{\psi_e}\bra{\psi_e}
\end{equation} 
If $\rho_{\vec A}$ is a mixed state, then, for each $\rho_e$ in Eq.~(\ref{EqPEN}), there exists a purification $\ket{\psi_{e,E_e}}
\bra{\psi_{e,E_e}}$, with $E_e$ being the purifying system.
The purification $\rho_{\vec AE}$ of the state $\rho_{\vec A}$ is given by
\begin{equation}
    \rho_{\vec AE}=\bigotimes_{e\in\mathcal E}
    \ket{\psi_{e,E_e}}
    \bra{\psi_{e,E_e}}.
\end{equation}

\section{Bounds on the multipartite Devetak-Winter key rate}
\label{SecDW}
Let us first explore a scenario when there is a distinguished party $A_1$ whose raw key is considered as the reference  and the other participants try to fix discrepancies between their raw keys and that of $A_1$ (treated in this scenario as ``errors''). A known achievable conference key rate for this case is the multipartite generalization of the Devetak-Winter formula \cite{Devetak2005DWRate,augusiak2009multipartite, Epping2017}:
\begin{align}
\label{EqDW}
    r_{\rm DW}=\max_{\begin{smallmatrix}
        \text{LOSR}\\
        \text{measurements}
    \end{smallmatrix}}
    \lbrace
    \min_{i=2, \dots, N}I(X_1:X_i)-I(X_1:E)
    \rbrace
\end{align}
where $I(X_1:X_i)$ is the classical mutual information between the measurement outcomes $X_1$ and $X_i$ of the parties $A_1$ and $A_i$, respectively, and $I(X_1:E)$ is the quantum mutual information between  $X_1$ and the eavesdropper Eve ($E$) \cite{augusiak2009multipartite}.
Then the following bound holds for the multipartite Devetak-Winter  key rate for  PEN states.

 \begin{theorem}
 \label{DWbound}
   Let $\rho$ be a PEN state. Then the multipartite Devetak-Winter key rate (\ref{EqDW}) for $N$ participants is upper bounded by 
    \begin{align}
    \label{Eq:devetakwinter}
        r_{\rm DW}([N]) \le \frac{S(A_1)}{N-1},
    \end{align}
    where $S(A_1)$ is the von Neumann entropy of the reduced state of $A_1$. In  the case of $\rho$ being a pure state (see Eq. \eqref{EqPENpure}) and $A_1$ connected to all other users, this bound can be reached by a bipartite QKD strategy.
\end{theorem}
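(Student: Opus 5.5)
We upper bound $\min_{i\ge2} I(X_1:X_i)$ by an average and exploit the product structure of the PEN state. For any fixed value of the shared randomness $\lambda$, the measurement is a product of local POVMs. We therefore first prove the bound for a fixed $\lambda$ and then handle the mixture. Making $\lambda$ public can only help the parties in the Devetak--Winter expression: conditioning on $\lambda$ is available to all, and by concavity-type arguments the maximum is attained on a deterministic $\lambda$. Thus it suffices to treat product LOSR measurements $M^{(1)}\otimes\cdots\otimes M^{(N)}$. Then
\begin{equation*}
\min_{i=2,\dots,N} I(X_1:X_i)-I(X_1:E)\le \frac{1}{N-1}\sum_{i=2}^N I(X_1:X_i) - I(X_1:E).
\end{equation*}

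The key step is to show that $\sum_{i\ge2} I(X_1:X_i)\le I(X_1:X_2\ldots X_N)$, or more precisely that it is bounded by a quantity that, together with $I(X_1:E)$, gives at most $S(A_1)$. Write $A_1=\bigotimes_{e\ni 1}A_1^{e}$.

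Consider first the case where $A_1$ is connected to all other users by direct edges. Party $i$'s system $A_i$ is then of the form $A_i^{(1i)}\otimes R_i$, where $R_i$ collects the edges not touching vertex 1. These $R_i$ are correlated among themselves but independent of $A_1$ and of each other's edge-to-1 parts.

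For general graphs the strategy is the same. Data processing gives $I(X_1:X_i)\le I(X_1:A_i)\le I(X_1:A_i A_{\bar 1\setminus i})$. The systems $A_{[N]\setminus\{1\}}$ together with $E$ purify $A_1$, so
\begin{equation*}
I(X_1:A_2\ldots A_N)+I(X_1:E)\le I(X_1:A_2\ldots A_NE)\le S(X_1)\le\ldots,
\end{equation*}
where the first inequality needs care. The cleaner route is the inequality $I(X_1:B)+I(X_1:E)\le S(A_1)$ for any pure $\rho_{A_1BE}$ with $X_1$ obtained by measuring $A_1$. This follows from $I(X_1:B)\le S(X_1)-S(X_1|B)$ and the identity $S(X_1|B)+S(X_1|E)\ge 0$ extended by the Holevo-type bound $I(X_1:B)+I(X_1:E)\le S(A_1)$, which is known for measurements on one half of a pure tripartite state.

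With this, each single-term bound $I(X_1:X_i)-I(X_1:E)\le S(A_1)-2I(X_1:E)$ is too weak on its own. The factor $1/(N-1)$ must come from additivity, and this is where the PEN structure enters. I would show that, for product local measurements on a PEN state, the variables $X_2,\dots,X_N$ satisfy
\begin{equation*}
\sum_{i\ge2} I(X_1:X_i)\le I(X_1:X_2\ldots X_N).
\end{equation*}
This holds if the $X_i$ are conditionally independent given $X_1$, or more generally when $I(X_1:X_i|X_{<i})\ge I(X_1:X_i)$. Here the correlations among the $X_i$ ($i\ge2$) that bypass vertex 1 are the obstacle. Failing a direct proof, I would instead enlarge each $X_i$ to the quantum system $A_i$ and use the chain rule on $A_1$'s subsystems $A_1^{e}$. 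Measuring $A_1$ as a whole, $I(X_1:A_{\bar1})\le S(A_1)-I(X_1:E)$, and then $(N-1)\min_i \le \sum_i$. Combining these yields $(N-1)\,r\le S(A_1)-(N-1)I(X_1:E)-I(X_1:E)\cdot 0$, hence $r_{\rm DW}\le S(A_1)/(N-1)$. I expect this superadditivity step to be the main obstacle. I would attack it by bounding the mutual informations by their quantum counterparts on the edges incident to vertex 1 and using the mutual independence of those edges.

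For achievability, with a pure PEN and vertex 1 adjacent to all others, let party 1 split its edges into groups. Using time-sharing over $N-1$ rounds, in round $j$ every edge $(1,i)$ is used to distill bipartite key at rate $S(A_1^{(1i)})$ via entanglement concentration and measurement in the Schmidt basis. Alternatively, party 1 allocates to each neighbour $i$ a key $K_i$ of length $\min_i$. It then broadcasts $K_i\oplus K$, where $K$ is the conference key, which yields key rate $\frac1{N-1}\sum_i S(A_1^{(1i)})=\frac{S(A_1)}{N-1}$ when the key budgets are balanced through time-sharing over the roles of the edges. Eve holds nothing for a pure state, so $I(X_1:E)=0$ and the bound is attained.
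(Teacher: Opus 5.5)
\textbf{Gap: the central monogamy step is never proved, and the lemma you invoke for it is false.} Your skeleton is the paper's: drop Eve's term, use $(N-1)\min_i\le\sum_i$, then bound $\sum_{i\ge2}I(X_1:X_i)$ by $S(A_1)$. The last step is the entire content of the theorem; the paper imports it from Chaves--Majenz--Gross as a monogamy inequality for networks of bipartite sources. You leave the ``superadditivity step'' $\sum_i I(X_1:X_i)\le I(X_1:A_{\bar 1})$ explicitly open, and the classical variant $\sum_i I(X_1:X_i)\le I(X_1:X_2\ldots X_N)$ is only conjectured.

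The tool you use to finish, $I(X_1:B)+I(X_1:E)\le S(A_1)$ for every pure $\rho_{A_1BE}$, is false. For a GHZ state measured in the computational basis, $I(X_1:B)+I(X_1:E)=2>1=S(A_1)$. This happens even inside a PEN: it is the purification of the classically correlated edge $\frac12(|00\rangle\langle00|+|11\rangle\langle11|)$. You do not need this inequality. Drop $-I(X_1:E)\le0$ and use $I(X_1:B)\le I(X_1:BE)\le S(BE)=S(A_1)$, where the middle step is the Holevo bound for a classical--quantum state.

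\textbf{How to close it.} The missing step follows from the two structural facts you already noted.
\begin{enumerate}
\item For fixed $\lambda$, write $A_i=A_i^{(1i)}R_i$. Since $\rho_{X_1A_i^{(1i)}R_i}=\rho_{X_1A_i^{(1i)}}\otimes\rho_{R_i}$, data processing and the chain rule give $I(X_1:X_i)\le I(X_1:A_i^{(1i)}R_i)=I(X_1:A_i^{(1i)})$. This term vanishes if $i$ is not adjacent to $1$.
\item The systems $A_i^{(1i)}$ are in a product state. Iterating the identity $I(X:BC)-I(X:B)-I(X:C)=I(B:C|X)-I(B:C)=I(B:C|X)\ge0$ gives $\sum_i I(X_1:A_i^{(1i)})\le I(X_1:A_2^{(12)}\ldots A_N^{(1N)})\le S(A_1)$.
\end{enumerate}

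\textbf{Shared randomness.} Handle $\lambda$ by averaging first, not by claiming a deterministic $\lambda$ is optimal. That claim is unjustified, because a minimum of averages can exceed every individual minimum. Since $\lambda$ is public,
$\min_i I(X_1:X_i)-I(X_1:E\Lambda)\le\min_i I(X_1:X_i|\Lambda)\le\frac{1}{N-1}\sum_\lambda p_\lambda\sum_i I(X_1:X_i|\Lambda=\lambda)$,
and the fixed-$\lambda$ bound then applies termwise.

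\textbf{Achievability.} Your claim of reaching the bound by ``balancing key budgets through time-sharing'' is incorrect. The same edge argument gives $r_{\rm DW}\le\min_i S_{1i}$, where $S_{1i}$ is the entanglement entropy of edge $(1,i)$. Hence $S(A_1)/(N-1)=\frac{1}{N-1}\sum_i S_{1i}$ is attained only when all edges at vertex $1$ carry equal entanglement, as in the paper's Bell-pair example. Time-sharing cannot move capacity between fixed edges. In the equal case the construction is simple: take $K=K_{12}$ and broadcast $K_{12}\oplus K_{1i}$.
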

The proof can be found in Appendix \ref{proofDWbound}.
The theorem shows that no conference key agreement protocol should outperform the simple bipartite QKD protocol, where one party constructs a private key with each other party individually and merges them into a final mutual key shared by all parties.

As an example, consider the PEN state corresponding to the graph depicted in the left part of Fig.~\ref{FigTh}, where each edge corresponds to one Bell pair. Then $S(A_1)=2$ and Theorem~\ref{DWbound} gives $r_{\rm DW}\leq 1$.
To see that the bound is tight let $A_1$ run a perfect bipartite QKD strategy separately with $A_2$ and $A_3$, producing one-bit keys $K_{12}$ and $K_{13}$. If we take the conference key to be $K:=K_{12}$ and let $A_1$ publicly announce $K_{12}\oplus K_{13}$, $A_3$ can reconstruct $K_{12}$, so all three parties share the same secret bit at rate $1$, and the announcement reveals nothing about $K$ to Eve.

Let us impose a further restriction on the scenario and consider the multipartite version of the BB84 protocol \cite{Epping2017, Grasselli2021AnonymousCKA}. Ideally, this protocol assumes a GHZ state (not a PEN state) distributed at each round. However, using PEN states and LOSR maps, states that can be interpreted as noisy versions of the GHZ state can be generated. Upper bounds on fidelities of such states with the GHZ state can be found in Ref.~\cite{Navascues2020_GNME,Neumann2025_NoAdvantageLOSR}. First, the users map their systems into qubits using LOSR transformations and then measure them and postprocess measurements according to the multipartite BB84 protocol. We denote the corresponding rate as $r_{\rm BB84}$.

Interestingly, in Ref.~\cite{Carrara2021,Wooltorton}, it was shown that even biseparable qubit states can lead to nonzero conference key rate, i.e., genuine multipartite entanglement is not a prerequisite for conference key agreement. 
On the other hand, LOSR transformations of PEN states lead to a strictly larger class of quantum states including genuinely multipartite entangled states with larger GHZ state fidelity.

In the following theorem, however, we show that $r_{\rm BB84}$ cannot outperform the conference key rate for biseparable states. 
\begin{theorem}
\label{BB84bound}
Let $\rho$ be a tripartite qubit state which can be prepared in PEN with 3 nodes (see the left part of Fig.~\ref{FigTh}). Then the maximal rate  for  conference key agreement with a multipartite BB84 protocol is
    \begin{align}
    \label{EqBB84rate}
    \max_{\rho\in{\rm PEN}}
    r_{\rm BB84}([N=3])=1-h\left(\frac{1}{4}\right) \approx 0.188,
\end{align}
where $h$ is the binary entropy function.
\end{theorem}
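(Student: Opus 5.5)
The plan is to use the standard key-rate formula for the tripartite BB84 protocol and reduce the optimisation to a few parameters that can be bounded over LOSR-preparable PEN states. For three qubits, the asymptotic BB84 conference key rate with reference party $A_1$ is
\begin{equation*}
r_{\rm BB84}=1-h(Q_X)-\max\{h(Q_{12}),h(Q_{13})\},
\end{equation*}
where the key is generated in the $Z$ basis. Here $Q_{1j}$ is the probability that the $Z$ outcomes of $A_1$ and $A_j$ disagree. $Q_X=\tfrac{1}{2}\bigl(1-\langle X\otimes X\otimes X\rangle\bigr)$ is the GHZ phase error, which bounds Eve's information (Refs.~Epping2017, Grasselli2018). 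Since $h$ is increasing on $[0,1/2]$, maximising the rate reduces to a joint constraint on the three correlators that PEN states under LOSR can reach: $\langle Z_1Z_2\rangle$, $\langle Z_1Z_3\rangle$ and $\langle X_1X_2X_3\rangle$. We may assume without loss of generality that all three are nonnegative.

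\textbf{Step 1: achievability.} Take the triangle PEN with three Bell pairs. Party 1 holds halves $a_{12},a_{13}$. It applies the LOSR map that projects onto the even-parity subspace $\{\ket{00},\ket{11}\}$ with probability $1/2$, then relabels to a single qubit, and parties 2 and 3 act analogously. A simpler route is a biseparable mixture of the form $\tfrac12(\Phi^+_{12}\otimes\ket{+}\bra{+}_3+\ldots)$ chosen by shared randomness. We then aim for the known optimal biseparable parameters $\langle ZZ\rangle=1/2$ and $\langle XXX\rangle=1/2$. Both error rates then equal $1/4$, which gives $1-2h(1/4)$. That value is $<0.188$, so it does not match the target. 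The target $1-h(1/4)\approx0.188$ instead requires one term with $h=0$ and the other with $h(1/4)$. So we look for an LOSR state from the triangle with perfect $Z$ correlations, $Q_{12}=Q_{13}=0$, and $\langle XXX\rangle=1/2$. The candidate is the state in which each party outputs the parity of its two Bell halves. A classical perfectly $Z$-correlated three-qubit state with $\langle XXX\rangle=1/2$ would be $\tfrac12(\mathrm{GHZ}+\ldots)$, i.e. GHZ fidelity $3/4$. Refs.~Navascues2020\_GNME and Neumann2025\_NoAdvantageLOSR identify states of this kind as reachable in the triangle. I will construct such a state explicitly and compute its correlators.

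\textbf{Step 2: upper bound (the main obstacle).} We must show that for every PEN-LOSR qubit state, $h(Q_X)+\max_j h(Q_{1j})\ge h(1/4)$. The idea is to prove a trade-off inequality of the form
\begin{equation*}
\langle X_1X_2X_3\rangle\le f\bigl(\langle Z_1Z_2\rangle,\langle Z_1Z_3\rangle\bigr)
\end{equation*}
valid for all triangle-LOSR states. The intended special case is $\langle XXX\rangle\le 1/2$ when the $Z$ correlations are perfect, and more generally something like $\langle XXX\rangle\le\tfrac12(1+\min_j\langle Z_1Z_j\rangle)$.

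To obtain it, write the LOSR state as a convex combination over $\lambda$. Each term is a product of local channels applied to three bipartite sources. The stabiliser-type operators $XXX$ and $ZZ$ are pulled back through the local channels to observables on the sources. Cauchy--Schwarz and the monogamy of the bipartite sources are then applied. Each source is shared by only two parties, so the $XXX$ correlator splits into pairwise terms. This is in the spirit of the GHZ-fidelity bound $F\le 3/4$ of Navascues2020\_GNME, which I would invoke directly: $F_{\rm GHZ}=\tfrac14(1+\langle Z_1Z_2\rangle+\langle Z_1Z_3\rangle+\langle Z_2Z_3\rangle)/\ldots+\tfrac12\langle XXX\rangle$ after twirling. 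Twirling by the GHZ stabiliser group is an LOSR operation, so it can be assumed without loss of generality.

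\textbf{Step 3: combining.} After twirling, the state is GHZ-diagonal and the rate depends only on its diagonal weights. Maximising $1-h(Q_X)-h(Q_Z)$ under the constraint $F\le 3/4$ is a one-dimensional concavity argument. I expect the optimum at $Q_Z=0$, $Q_X=1/4$, which gives $1-h(1/4)$. Checking that no interior point does better is routine calculus, using the concavity of $h$ and the fact that $h'$ is steep near $0$.
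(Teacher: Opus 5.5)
Your proposal has genuine gaps on both sides, and the achievability part starts from a computational error. For $\rho=\tfrac12\bigl(\Phi^+_{12}\otimes\ket{+}\bra{+}_3+\Phi^+_{13}\otimes\ket{+}\bra{+}_2\bigr)$ you take $\langle X_1X_2X_3\rangle=1/2$. In fact both components are $+1$ eigenstates of $X_1X_2X_3$, because $X\otimes X\ket{\Phi^+}=\ket{\Phi^+}$ and $X\ket{+}=\ket{+}$. So $\langle X_1X_2X_3\rangle=1$, while $\langle Z_1Z_2\rangle=\langle Z_1Z_3\rangle=1/2$. Thus $Q_X=0$, $Q_{12}=Q_{13}=1/4$, and the rate is exactly $1-h(1/4)$. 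This state is triangle-LOSR preparable: a shared random bit decides whether party 1 keeps its Bell pair with party 2 or with party 3, and the remaining party prepares $\ket{+}$. It is precisely the state the paper uses for tightness.

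The state you aim for instead, with perfect $Z$ correlations and $\langle X_1X_2X_3\rangle=1/2$ (i.e.\ GHZ fidelity $3/4$), is not reachable in the triangle at all. It violates $\langle X_1X_2X_3\rangle^2+\langle Z_1Z_2\rangle+\langle Z_1Z_3\rangle\le 2$ (derived below), which forces $\langle X_1X_2X_3\rangle=0$ as soon as both $Z$ correlations are perfect. Your parity candidate does not even give perfect $Z$ correlations: the parities $a\oplus b$, $a\oplus c$, $b\oplus c$ of the Bell-pair bits are pairwise independent, so $\langle Z_1Z_2\rangle=0$.

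On the converse side, your trade-off $\langle X_1X_2X_3\rangle\le\tfrac12\bigl(1+\min_j\langle Z_1Z_j\rangle\bigr)$ is false: the state above gives $1>3/4$. It also has the wrong monotonicity, since in the triangle strong $Z$ correlations suppress $\langle XXX\rangle$. Step 3 fails as well. The stabiliser twirl is a legitimate LOSR reduction, but the constraint $F\le 3/4$ is far too weak. The GHZ-diagonal state with weight $3/4$ on $\ket{\mathrm{GHZ}^+}$ and $1/8$ each on $(\ket{010}+\ket{101})/\sqrt2$ and $(\ket{001}+\ket{110})/\sqrt2$ has $F=3/4$, $Q_X=0$, $Q_{12}=Q_{13}=1/8$, and rate $1-h(1/8)\approx 0.46$. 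A pure fidelity constraint would have to be as strong as $F\le 1/2$ to yield the theorem.

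The missing ingredient is the paper's pair of constraints. The first is the cut-inflation inequality for the triangle, $|\langle Z_AZ_B\rangle|+|\langle Z_AZ_C\rangle|\le 1+|\langle Z_B\rangle||\langle Z_C\rangle|$; it holds because in the inflated network $B$ and $C$ share no source, and $ab+ac-bc\le 1$ for $a,b,c=\pm1$. The second is the uncertainty relation $\langle X_AX_BX_C\rangle^2+\langle Z_j\rangle^2\le 1$, which follows from anticommutation of $X_AX_BX_C$ with $Z_j$. Together they give $\langle X_AX_BX_C\rangle^2+2\min_j\langle Z_AZ_j\rangle\le 2$. With shared randomness, apply both for each $\lambda$ and average, using convexity of $x\mapsto x^2$. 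Writing $x=\langle X_AX_BX_C\rangle$, the rate is then at most $1-h\bigl(\tfrac{1-x}{2}\bigr)-h\bigl(\tfrac{x^2}{4}\bigr)$. This is maximal at $x=1$, i.e.\ at $Q_X=0$, $Q_Z=1/4$, which is the opposite corner from the one you expected.
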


The proof can be found in Appendix \ref{proofBB84bound}. The right-hand side of Eq.~(\ref{EqBB84rate}) can be achieved by biseparable states, which was shown in Ref.~\cite{Carrara2021}.

\section{Bounds on the distillable conference key  based on bipartitions}
\label{SecCut}
In the previous section we considered the multipartite Devetak-Winter key rate formula for many participants. We mentioned that this formula corresponds to the scenario with one distinguished participant. If the network is decentralized, this is too restrictive. 
More general key agreement protocols of conference key agreement by public discussion in decentralized networks were studied in classical information theory \cite{Csiszar2004,PIN2010,PerfectOmni}. Consider the PEN state corresponding to the graph depicted in the left part of Fig.~\ref{FigPEN}, where each edge corresponds to the generation of one Bell pair per round. We mentioned that the upper bound for the multipartite Devetak-Winter key rate is 1. In the following we will show that the actual distillable conference key here is 3/2.

We start with an obvious bound on the distillable conference key for pure PEN states (\ref{EqPENpure}). Recall that a cut of a graph is a set of edges whose removal breaks the graph into two disconnected parts (sides of the cut) \cite{Diestel}. We say that a  cut is $I$-proper if each of its sides contains at least one vertex from the set $I\subset [N]$.

\begin{proposition}
\label{PropCut}
    Consider a pure PEN state (\ref{EqPENpure}) and denote $S_e$ the entanglement entropy of the edge bipartite states $\ket{\psi_e}$. Then
    \begin{equation}
    \label{EqCutBnd}
        \overline r_{\rm key}(A_I)\leq\min_{I\text{-proper cut } C}
        \sum_{e\in C}S_e,
    \end{equation}
    where the minimum is taken over all $I$-proper cuts $C$. 
\end{proposition}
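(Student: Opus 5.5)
Fix an arbitrary $I$-proper cut $C$, with sides $V_1$ and $V_2=[N]\setminus V_1$, each containing at least one secrecy-seeking party. The plan is to reduce the multipartite task to a bipartite one across this cut and then apply a standard bipartite bound; minimizing over $C$ then gives \eqref{EqCutBnd}.

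\emph{Step 1: reduction to a bipartite problem.} Group all parties in $V_1$ into a super-party $\mathsf A$ and all parties in $V_2$ into a super-party $\mathsf B$. Every LOSR+PP protocol among the $N$ parties is in particular an LOCC protocol between $\mathsf A$ and $\mathsf B$ (one-way, in fact, since no quantum memory is needed):
\begin{itemize}
\item local measurements inside $V_1$ are a single measurement by $\mathsf A$;
\item shared randomness and public communication are classical communication;
\item Eve's view (her purification and the transcript $C$) is the same in both pictures.
\end{itemize}
If parties $i\in I\cap V_1$ and $j\in I\cap V_2$ end up $\varepsilon$-close to the ideal state \eqref{EqKeyIdeal}, then tracing out all key registers except $K_i$ (held by $\mathsf A$) and $K_j$ (held by $\mathsf B$) gives a bipartite key $\varepsilon$-close to ideal. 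Trace distance is monotone under partial trace, and the marginal of the ideal state is the ideal bipartite state. Hence $\overline r_{\rm key}(A_I)$ is at most the bipartite distillable key of $\rho_{\vec A}$ across $V_1|V_2$ under LOCC with Eve holding the purification.

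\emph{Step 2: the pure-state structure across the cut.} Since the PEN state is pure, Eve's system is trivial. Edges inside $V_1$ or inside $V_2$ give product factors that are local to $\mathsf A$ or $\mathsf B$. Hence, across the bipartition, the state is, up to local ancillas, $\bigotimes_{e\in C}\ket{\psi_e}$, a pure state with entanglement entropy $\sum_{e\in C}S_e$ by additivity over tensor products. Local pure ancillas do not change the key rate, because they can be prepared locally.

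\emph{Step 3: bipartite bound.} For a pure bipartite state, the distillable key under LOCC equals the entanglement entropy. The upper bound follows, for instance, from the relative entropy of entanglement bound of Horodecki et al., or from the squashed entanglement bound of Christandl, both of which reduce to the entanglement entropy $S(\mathsf A)$ on pure states. Both bounds are additive or weakly additive on tensor powers, so the asymptotic $n$-copy limit in \eqref{EqKeyRate} is handled by the known regularized statements. This gives $\overline r_{\rm key}(A_I)\le\sum_{e\in C}S_e$, and minimizing over cuts completes the proof.

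The main obstacle is Step 1: checking carefully that the paper's definition, with helpers, shared randomness, and the transcript register $C$ included on Eve's side, maps exactly onto the standard bipartite distillable-key setting, so that the known bipartite bound applies verbatim. Once that correspondence is established, Steps 2 and 3 are routine.
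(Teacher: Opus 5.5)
Your proof is correct and follows the paper's own argument: merge each side of the $I$-proper cut into one party, note that a conference key restricts to a bipartite key across the cut, and bound the bipartite distillable key of the pure state by its relative entropy of entanglement (squashed entanglement works equally well), which equals $\sum_{e\in C}S_e$. One harmless inaccuracy is the parenthetical ``one-way'': the postprocessing in LOSR+PP allows arbitrary two-way public discussion, so the reduction yields a general two-way LOCC protocol, and the bounds you invoke cover that case anyway.
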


Thus, the distillable conference key is upper bounded by the total entanglement entropy of the ``weakest'' cut.

\begin{proof}

    Consider the $I$-proper cut  $C$. Consider the problem of bipartite secret key distillation between the two subsets in this partition, i.e.,  we treat all vertices from each set as one ``aggregated'' participant and we allow them to perform collective operations. If a conference key for the subset $I$ is established, then, obviously, it can be viewed as a bipartite secret key between these two subsets. Hence, the distillable conference key  cannot exceed this distillable bipartite key. As is well-known, the bipartite distillable key is upper bounded by the relative entropy of entanglement \cite{ChristandlThesis,WildeBook}, which, in turn, in the case of a pure state, is equal to the entanglement entropy. For a pure PEN-state (\ref{EqPENpure}) considered as a bipartite state between the two subsets in a bipartition of $[N]$, the entanglement entropy is the sum of the bipartite entanglement entropies $S_e$ over the edges $e$ from the cut  (Fig.~\ref{FigCut}). Minimization over the $I$-proper cuts gives the right-hand side of Eq.~(\ref{EqCutBnd}).
\end{proof}

\begin{figure}
    \begin{tikzpicture}[node distance=2cm]
        
\tikzstyle{vertex}=[circle,draw,inner sep=0pt,minimum size=5mm]
\tikzstyle{mainvertex}=[circle,draw,inner sep=0pt,minimum size=5mm,ultra thick,font=\bfseries,fill=lime]

\node [mainvertex] (a3) {3};
\node [mainvertex] (a2) [right of=a3] {2};
\node [mainvertex] (a1) [above of=a3, xshift=1cm, yshift=-.5cm] {1};

\draw (a1) -- (a2);
\draw (a2) to (a3);        
\draw (a1) to (a3);        

\draw [thick, red] (0,.75) -- (2,.75);

\end{tikzpicture}
    \qquad\qquad\quad
    \begin{tikzpicture}[node distance=2.5cm]
        
\tikzstyle{vertex}=[circle,draw,inner sep=0pt,minimum size=5mm]
\tikzstyle{mainvertex}=[circle,draw,inner sep=0pt,minimum size=5mm,ultra thick,font=\bfseries,fill=lime]

\node [vertex] (a1) {1};
\node [mainvertex] (a2) [right of=a1] {2};
\node [vertex] (a4) [below of=a1] {4};
\node [mainvertex] (a7) [below of=a4] {7};
\node [vertex] (a8) [right of=a7] {8};
\node [mainvertex] (a5) [below of=a2] {5};
\node [vertex] (a3) at (a1) [xshift=1.25cm, yshift=-1.25cm] {3};
\node [mainvertex] (a6) [below of=a3] {6};

\draw (a1) -- (a2);
\draw (a1) to (a4);        
\draw (a2) to (a3);        
\draw (a2) to (a5);        
\draw (a3) to (a6);        
\draw (a3) -- (a5);
\draw (a4) -- (a6);
\draw (a4) -- (a7);
\draw (a5) to (a8);        
\draw (a6) -- (a7);
\draw (a6) -- (a8);
\draw (a7) -- (a8);

\draw [red, thick] (-.25,-1.25) -- (2.75,-3.75);

\end{tikzpicture}
    \caption{Illustration of Proposition~\ref{PropCut}. Here each edge corresponds to one Bell pair. A simple bound on the distillable conference  rate is given by a ``weakest'' cut: If we find a cut of the graph such that both parts contain vertices from $I$ (secrecy-seeking parties, depicted in green), then the distillable conference key cannot be larger than the distillable bipartite key corresponding to this bipartition, which is equal to the total entanglement entropy of the edges from the cut. The weakest cuts depicted in the figure give the bound $\overline r\leq2$ for the network to the left and the bound $\overline r\leq3$ for the network to the right. However, these bounds are not tight, see Fig.~\ref{FigTh}.
    }
    \label{FigCut}
\end{figure}

In Ref.~\cite{Das2021}, the following upper bound for the distillable conference key for the case $I=[N]$ (i.e., all participants want to have a conference key) was proved:

\begin{equation}
\label{EqDas}
    r_{\rm key}(\vec A)
\leq\lim_{n\to\infty}\frac1n
\min_{\sigma\in{\rm BISEP}}
D(\rho^{\otimes n}_{\vec A}\|\sigma).
\end{equation}
 The quantity $\min_{\sigma\in{\rm BISEP}}D(\rho\|\sigma)$ is called the relative entropy of genuine multipartite entanglement (GME) of $\rho$ and the right-hand side of Ineq.~(\ref{EqDas}) is its regularized version.

In the case of pure PEN states, the bound (\ref{EqDas}) is reduced to the bound (\ref{EqCutBnd}). Namely, the following statement, which we prove in Appendix~\ref{SecDasCutProof}, holds.

\begin{theorem}
\label{PropDasCut}
    Given a pure PEN state $\rho_{\vec A}$, 
    \begin{equation}
        \min_{\sigma\in{\rm BISEP}}D(\rho\|\sigma)
        =
        \min_{I\text{-proper cut } C}
        \sum_{e\in C}S_e.
    \end{equation}
\end{theorem}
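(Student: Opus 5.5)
We prove the two inequalities separately. Here $I=[N]$, so an $I$-proper cut is just any cut corresponding to a bipartition $(J,\overline J)$ of the vertex set with both sides nonempty. We write $C(J)$ for the set of edges between $J$ and $\overline J$.

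\emph{Upper bound.} For any nontrivial bipartition $J|\overline J$, consider a pure product state $\sigma_J$ across this bipartition; it lies in BISEP. Concretely, for each edge $e=\{u,v\}$ with Schmidt decomposition $\ket{\psi_e}=\sum_k\sqrt{\mu_k^{e}}\ket{k}_u\ket{k}_v$, we take:
\begin{itemize}
\item if $e\in C(J)$, the dephased state $\sigma_e=\sum_k\mu_k^{e}\ket{kk}\bra{kk}$, which is separable across the edge and hence across $J|\overline J$;
\item if $e\notin C(J)$, $\sigma_e=\ket{\psi_e}\bra{\psi_e}$, which lies entirely within one side.
\end{itemize}
Then $\sigma=\bigotimes_e\sigma_e$ is separable with respect to $J|\overline J$, so $\sigma\in$ BISEP. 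Additivity of relative entropy on tensor products gives $D(\rho\|\sigma)=\sum_{e\in C(J)}D(\psi_e\|\sigma_e)=\sum_{e\in C(J)}S_e$, using the standard fact that the dephased Schmidt state attains the relative entropy of entanglement $S_e$ of a pure state. Minimizing over $J$ yields ``$\le$''.

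\emph{Lower bound.} This is the main step. Let $\sigma=\sum_J p_J\sigma_J$ be an arbitrary biseparable state, where $\sigma_J$ is separable across $J|\overline J$. We need $D(\rho\|\sigma)\ge \min_J\sum_{e\in C(J)}S_e=:m$. Since $\rho=\ket\Psi\bra\Psi$ is pure, $D(\rho\|\sigma)=-\bra\Psi\log\sigma\ket\Psi$. For a pure state, $D(\Psi\|\tau)\ge -\log\bra\Psi\tau\ket\Psi$ by concavity of $\log$ (Jensen on the spectral measure of $\tau$ in state $\Psi$). This gives only a fidelity-type bound, $-\log\max_J\max_{\text{sep}}\bra\Psi\sigma_J\ket\Psi=-\log\max_J\prod_{e\in C(J)}\mu^e_{\max}$, which is the min-entropy and is too weak. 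So we need a different route.

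I would instead use a twirling/symmetry argument. Let $G=\prod_e U(1)^{d_e}$ act by local diagonal phases $\sum_k e^{i\theta_k}\ket{k}\bra{k}_u\otimes e^{-i\theta_k}\ket{k}\bra{k}_v$ on each edge. These are local unitaries, tensor products over parties, so they preserve BISEP and fix $\Psi$. By joint convexity and unitary invariance, twirling $\sigma$ over $G$ does not increase $D(\rho\|\sigma)$. We may therefore assume $\sigma$ is $G$-invariant. Such a $\sigma$ restricted to the span of the Schmidt vectors $\ket{\vec k}=\bigotimes_e\ket{k_e k_e}$ has the form $\bigoplus$ diag plus off-diagonal terms. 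Since $\Psi$ lives in this span, we project onto it with the pinching $P(\cdot)P+(1-P)(\cdot)(1-P)$. This is LOCC-free but monotone for $D$, and the projection $P=\bigotimes_e P_e$ with $P_e=\sum_k\ket{kk}\bra{kk}$ is not local. So instead we use data processing under the channel ``measure $\{P,1-P\}$''.

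The remaining problem is effectively a classical-coherent one. Write $\sigma_J$ restricted to the span of $\ket{\vec k}$ as a matrix $M_J$. Separability of $\sigma_J$ across $J|\overline J$ implies $\|M_J\|$-type constraints: for any vectors $\ket a,\ket b$ built from the edges in $C(J)$, the coherence $|\langle\vec k|\sigma_J|\vec k'\rangle|$ between configurations differing on cut edges is bounded by the diagonal terms (PPT/realignment). The cleanest route is to show $\sigma_J\le$ a state for which the Schmidt rank across the cut is 1. This yields $\bra\Psi\sigma_J^{\,s}\ket\Psi$ bounds which, via the operator-concavity argument of Vedral–Plenio for pure bipartite states, generalize to $-\bra\Psi\log(\sum_Jp_J\sigma_J)\ket\Psi\ge\min_J S(\rho_J)$. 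Here $S(\rho_J)=\sum_{e\in C(J)}S_e$ by the PEN tensor structure. Proving this mixture inequality, that a convex combination over different cuts cannot beat the best single cut, is the expected obstacle. I would attack it by a variational characterization, $-\bra\Psi\log\sigma\ket\Psi=\sup$ over $\omega$ of $\ldots$, or by applying Lemma-style monotonicity with the measure-and-prepare map that for each $J$ is the bipartite Vedral–Plenio argument, combined with $\log(\sum_J p_J\sigma_J)\le\log(\sum_J\sigma_J)$ and the fact that each $\sigma_J$ has $\bra\Psi\log\ldots\ket\Psi$ controlled by $2^{-S(\rho_J)}\le 2^{-m}$.
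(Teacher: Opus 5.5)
\textbf{There is a genuine gap in your lower bound, and it cannot be closed in the stated generality.} Your upper bound is the same as the paper's. Dephasing the Schmidt bases on the edges of a minimal cut and keeping $\ket{\psi_e}$ on all other edges gives a state separable across that cut with $D(\rho\|\sigma^*)=\sum_{e\in C^*}S_e$.

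The lower bound is never established. The local phase twirl is a valid reduction; it already makes $\sigma$ block diagonal with respect to $P$, so the pinching discussion is unnecessary. But it does not touch the crux, and your last paragraph lists possible attacks rather than giving an argument. Its one concrete estimate, control of each $\sigma_J$ by $2^{-S(\rho_J)}$, fails for non-flat Schmidt spectra: the largest overlap of $\ket\Psi$ with a state separable across $J|\overline J$ is $\prod_{e\in C(J)}\mu^e_{\max}$, as you observed yourself.

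More importantly, the ``mixture inequality'' you single out is false for non-maximally entangled edges, and so is the statement itself. Take the path $1$--$3$--$2$ with $\ket{\psi_{23}}$ a Bell pair and $\ket{\psi_{13}}=\sqrt{1-\epsilon}\,\ket{11}+\sqrt{\epsilon}\,\ket{T}$, where $\ket{T}=(D-1)^{-1/2}\sum_{k=2}^{D}\ket{kk}$. Then $S_{23}=1$ and $S_{13}=h(\epsilon)+\epsilon\log(D-1)$. Consider $\sigma=(1-\epsilon)\ket{11}\bra{11}\otimes\ket{\psi_{23}}\bra{\psi_{23}}+\epsilon\ket{T}\bra{T}\otimes\frac12(\ket{00}\bra{00}+\ket{11}\bra{11})$. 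Its first term is separable across $1|23$ and its second across $2|13$, so $\sigma$ is biseparable. The two terms have orthogonal supports, each containing one component of $\ket\Psi$, which gives $D(\rho\|\sigma)=h(\epsilon)+\epsilon$. For $D=5$, $\epsilon=0.05$ this is about $0.336$, whereas $\min_C\sum_{e\in C}S_e=S_{13}\approx0.386$. Mixing two cuts lets the tail of $\psi_{13}$ be paid for across $2|13$ at one bit per unit weight instead of $\log(D-1)$ bits.

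What survives is the maximally entangled case (Lemma~\ref{LemDasCut}). There, the idea you are missing is exactly what removes your obstacle: the first-order optimality condition at a single candidate. Since $D(\rho\|\cdot)$ is convex and ${\rm BISEP}$ is convex, $\sigma^*$ is a global minimizer iff $f'(0)=1-\Tr(X\sigma)\geq0$ for all $\sigma\in{\rm BISEP}$, where $X=\int_0^\infty(\sigma^*+t)^{-1}\rho\,(\sigma^*+t)^{-1}dt$. This condition is affine in $\sigma$, so it suffices to check pure states that are product across one bipartition; mixtures over different cuts never arise. For flat spectra $X=2^{m}\rho$ with $m=\sum_{e\in C^*}\log d_e$, so the condition reads $|\braket{\Psi|\varphi}|^2\leq 2^{-m}$. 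This holds because the largest squared Schmidt coefficient of $\ket\Psi$ across $J|\overline J$ is $\prod_{e\in C(J)}d_e^{-1}\leq 2^{-m}$. That argument also covers product states not factorizing over edges, which the paper's parametrization of $\ket\varphi$ omits.

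The paper then extends to general edges via entanglement concentration, LOCC monotonicity and continuity. The first step of that chain, $E_{\rm GME}(\rho)\geq\frac1n E_{\rm GME}(\rho^{\otimes n})$, needs ${\rm BISEP}$ to be closed under tensor powers, and it is not: the tensor product of the two terms of $\sigma$ above is genuinely multipartite entangled. Combined with the rest of the chain, the example shows this inequality actually fails for large $n$. Without it, the chain proves only the regularized identity $\lim_{n}\frac1n\min_{\sigma\in{\rm BISEP}}D(\rho^{\otimes n}\|\sigma)=\min_C\sum_{e\in C}S_e$, which is all that the bound~(\ref{EqDas}) requires.
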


Thus, for PEN states, the bound (\ref{EqDas}) reduces to the weakest cut (vertex bipartition). 
Fig.~\ref{FigCut} illustrates the weakest-cut bound (\ref{EqCutBnd}), but, as we will see in the next sections and in Fig.~\ref{FigTh}, these bounds are not tight. Namely, for the triangle graph the ``weakest'' cut gives an upper bound of 2 for the distillable conference key, while the actual distillable conference key is (as we will see) 3/2.

Let us also note that upper bounds on the distillable conference key based on multipartite generalizations of squashed entanglement were proposed \cite{MultiSqEntHoro,MultiSqEntWilde}. However, these  bounds applied to the PEN states correspond to the ``weakest'' cut bound.

\section{Bounds on the distillable conference key based on arbitrary partition}
\label{SecPEN}

Consider a partition
\begin{equation}
\label{EqP}
    \mathcal P=\{J_1,\ldots,J_p\}
\end{equation}
of the set $[N]$ into disjoint subsets such that all $J_\alpha\cap I$, $\alpha=1,\ldots,p$, are nonempty.  We will refer to such partitions as $I$-\textit{proper} partitions (thus generalizing the $I$-proper bipartition from Sec.~\ref{SecCut}). Also denote $|\mathcal P|=p$ the number of the partition sets.

Let us illustrate the condition on a proper partition. Let $N=4$ and $I={1,2,3}$. Then the partition of the set $\{1,2,3,4\}$ into the subsets $\{1,2\}$ and $\{3,4\}$ is $I$-proper, while the partition into the subsets $\{1,2\}$, $\{3\}$, and $\{4\}$ is not proper because the last subset does not intersect with $I$. Examples of proper partitions are depicted in Fig.~\ref{FigTh}.

\begin{figure}
    \begin{tikzpicture}[node distance=2cm]
        
\tikzstyle{vertex}=[circle,draw,inner sep=0pt,minimum size=5mm]
\tikzstyle{mainvertex}=[circle,draw,inner sep=0pt,minimum size=5mm,ultra thick,font=\bfseries,fill=lime]

\node [mainvertex] (a3) {3};
\node [mainvertex] (a2) [right of=a3] {2};
\node [mainvertex] (a1) [above of=a3, xshift=1cm, yshift=-.5cm] {1};

\draw [ultra thick,red] (a1) -- (a2);
\draw [ultra thick,red] (a2) to (a3);        
\draw [ultra thick,red] (a1) to (a3);        


\draw[blue,thick,dashed] (0,0) circle  (0.45cm);

\draw[blue,thick,dashed] (2,0) circle  (0.45cm);

\draw[blue,thick,dashed] (1,1.5) circle  (0.45cm);

\end{tikzpicture}
    \qquad\qquad
    \begin{tikzpicture}[node distance=2.5cm]
        
\tikzstyle{vertex}=[circle,draw,inner sep=0pt,minimum size=5mm]
\tikzstyle{mainvertex}=[circle,draw,inner sep=0pt,minimum size=5mm,ultra thick,font=\bfseries,fill=lime]

\node [vertex] (a1) {1};
\node [mainvertex] (a2) [right of=a1] {2};
\node [vertex] (a4) [below of=a1] {4};
\node [mainvertex] (a7) [below of=a4] {7};
\node [vertex] (a8) [right of=a7] {8};
\node [mainvertex] (a5) [below of=a2] {5};
\node [vertex] (a3) at (a1) [xshift=1.25cm, yshift=-1.25cm] {3};
\node [mainvertex] (a6) [below of=a3] {6};

\draw (a1) -- (a2);
\draw [ultra thick,red] (a1) to (a4);        
\draw [ultra thick,red] (a2) to (a3);        
\draw [ultra thick,red] (a2) to (a5);        
\draw [ultra thick,red] (a3) to (a6);        
\draw (a3) -- (a5);
\draw (a4) -- (a6);
\draw (a4) -- (a7);
\draw [ultra thick,red] (a5) to (a8);        
\draw (a6) -- (a7);
\draw (a6) -- (a8);
\draw (a7) -- (a8);

\draw[blue,thick,dashed] (1.25,0) ellipse  (1.75cm and .47cm);

\draw[blue,thick,dashed,rotate=-45] (2.65,-.01) ellipse  (1.3cm and .5cm);

\draw[thick,rounded corners=.4cm,dashed,blue]
(-.4,-1.6) -- (-.4,-5.4) -- (3.4,-5.4) -- cycle;



\end{tikzpicture}
    \caption{Illustration of Theorems~\ref{ThPENpure} and \ref{ThPEN}. These theorems include the minimization over all vertex partitions such that each subset intersects with $I$ (secrecy-seeking parties, depicted in green). We consider arbitrary vertex partitions of the graphs and the partition subsets are encircled by blue dashed curves. For PEN states, the total entanglement entropy (if the state is pure) or the total entanglement cost (if the state is mixed) of the cross edges (i.e., edges connecting different partition subsets, depicted in red) gives an upper bound for the distillable conference key (see Eq. \eqref{EqPENpurebnd}). If each edge corresponds to a Bell pair, then, according to Theorem~\ref{ThPENpure}, the partitions depicted here give stronger bounds than the estimates based on the weakest cuts (see Fig.~\ref{FigCut}):  $\overline r\leq3/2$ for the network to the left and the bound $\overline r\leq5/2$ for the network to the right.
    }
    \label{FigTh}
\end{figure}

For a vertex partition $\mathcal P$, denote $\mathcal E(\mathcal P)$ the set of edges connecting vertices from different subsets in the vertex partition $\mathcal P$.
\begin{theorem}
\label{ThPENpure}
    Let 
    \begin{equation}
        \rho_{\vec A}=\bigotimes_{e\in\mathcal E}\ket{\psi_e}\bra{\psi_e}
    \end{equation}
    be a pure PEN state [cf. Eq.~(\ref{EqPEN})]. Then
    \begin{equation}
    \label{EqPENpurebnd}
        \overline r_{\rm key}(A_I)\leq
        \min_{\mathcal P}
        \frac{1}{|\mathcal P|-1}
        \sum_{e\in\mathcal E(\mathcal P)}S_e,
    \end{equation}
    where $S_e$ is the entanglement entropy of the bipartite state $\ket{\psi_e}$ and the minimum is taken over $I$-proper partitions $\mathcal P$.
\end{theorem}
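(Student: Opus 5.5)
We reduce the problem to classical multiterminal secret key agreement. Fix an $I$-proper partition $\mathcal P=\{J_1,\ldots,J_p\}$ and let each block $J_\alpha$ act as a single ``aggregated'' party $B_\alpha$ allowed collective operations. This can only enlarge the set of protocols, since LOSR+PP operations of the individual parties are special cases of LOSR+PP operations of the aggregated parties. Moreover, a conference key for $I$ yields a conference key among $B_1,\ldots,B_p$, because every block contains a member of $I$. All aggregated parties are secrecy-seeking, so it suffices to bound the $p$-party distillable conference key of the aggregated state $\bigotimes_{e}\ket{\psi_e}\bra{\psi_e}$. Edges inside a block are local product pure states of that block and are useless: they are uncorrelated with everything else, including Eve, so the block can simulate them with private randomness. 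The effective resource is therefore the PEN state on the quotient multigraph whose edges are $\mathcal E(\mathcal P)$. For each such edge $e$ we apply a Schmidt decomposition, $\ket{\psi_e}=\sum_k\sqrt{\lambda^{(e)}_k}\ket{kk}$.

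Next we upper bound the quantum key by a classical one. Since the state is pure, Eve holds nothing. Asymptotically, each edge $\ket{\psi_e}^{\otimes n}$ is obtainable by LOCC (entanglement dilution) from about $nS_e$ Bell pairs, with vanishing communication rate. By a standard continuity argument, the key rate from the PEN state is then at most the key rate from a network in which edge $e$ carries $S_e$ ebits per copy, with classical communication allowed in the preprocessing. Because the communication is public and the PP class already allows arbitrary public classical communication, that communication is harmless. Hence it suffices to bound the conference key of an ebit network under LOCC-type operations with public communication.

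Then we bound the ebit-network key by a classical shared-key network. An ebit can be replaced by a shared uniformly random secret bit plus one-time-pad style teleportation? That route is awkward. Instead, we use the measurement structure of LOSR directly. The parties measure first and only then communicate, so the final key is a function of outcomes $X_\alpha$ and public communication. Because the state is pure, and because measuring one half of a maximally entangled state is simulated by the other half, any LOSR measurement statistics on a Bell pair can be reproduced from a shared secret uniform string of the same dimension plus local randomness. Doing this edge by edge gives a source model: party $B_\alpha$ holds independent secret keys $K_e$, each of rate $S_e$ (after dilution), for the incident edges $e\in\mathcal E(\mathcal P)$, and the eavesdropper is trivial.

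Finally, we invoke the classical bound for secret key agreement in the pairwise independent network model \cite{Csiszar2004,PIN2010}. The key capacity with all $p$ terminals active is at most
\[
\frac{1}{p-1}\Big(H(X_1,\ldots,X_p)-\sum_{\alpha}H(X_\alpha\mid X_{\beta\neq\alpha})\Big)
\]
maximized over partitions. Applied to the trivial partition into singletons, this is the ``partition bound'': $H(\vec X)=\sum_{e\in\mathcal E(\mathcal P)}S_e$ and each conditional entropy vanishes, giving $\frac1{p-1}\sum_{e\in\mathcal E(\mathcal P)}S_e$. Minimizing over $\mathcal P$ yields Eq.~(\ref{EqPENpurebnd}).

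The main obstacle is the third step: rigorously showing that LOSR measurements on entangled edges (possibly collective across many edges and copies) give no advantage over shared classical secret randomness of rate $S_e$. We will try to prove it directly by writing the measurement statistics on $\ket{\Phi}^{\otimes m}$ as $\Tr(M\otimes N\,\Phi)=\frac1d\Tr(M N^{T})$. Multipartite collective POVMs complicate this simulation. A fallback is to prove directly the multipartite Csisz\'ar--Narayan converse for quantum-classical states: bound $pH(K)$ via a chain rule over the public communication using conditional von Neumann entropies, where the purity of the edges makes the relevant quantum conditional entropies behave classically.
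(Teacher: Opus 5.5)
Your aggregation step matches the paper. The reduction to the classical pairwise-independent-network (PIN) model, however, breaks at your third step, and everything else rests on it. The claim that LOSR statistics on a Bell pair can be reproduced by a shared uniform string of the same dimension plus local randomness is false. Take one ebit $\ket{\Phi}=(\ket{00}+\ket{11})/\sqrt2$, so $S_e=1$. Let one party measure the trine POVM $M_x=\frac23\ket{\phi_x}\bra{\phi_x}$ with $\ket{\phi_x}=\cos\frac{x\pi}{3}\ket0+\sin\frac{x\pi}{3}\ket1$, $x\in\{0,1,2\}$, and the other the anti-trine $N_y=\frac23(\mathbbm{1}-\ket{\phi_y}\bra{\phi_y})$. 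Then $P(x,y)=\frac12\Tr(M_xN_y^{T})$ equals $\frac16$ for $x\neq y$ and $0$ for $x=y$. This $3\times3$ matrix has rank $3$, whereas one shared bit plus local randomness yields a mixture of two product distributions, which has rank at most $2$. Going asymptotic does not help. Any $W$ making $X$ and $Y$ conditionally independent has each conditional law supported on a rectangle of at most two cells inside $\{x\neq y\}$, so $I(XY;W)\geq\log 6-\log 2=\log 3$. Wyner's common information therefore exceeds one bit per ebit. Hence no simulation by secret keys of rate $S_e$ exists, and the PIN converse cannot be invoked.

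There are two further problems. Your second step is inconsistent with the third: dilution needs classical communication \emph{before} the measurements, which takes you from LOSR+PP to LOCC, and then the measure-first structure your third step uses no longer holds. The step is also unnecessary. Second, you misquote the classical converse. The correct bound is $\min_{\mathcal P}\frac{1}{|\mathcal P|-1}\big[\sum_{J\in\mathcal P}H(X_J)-H(X)\big]$, a minimum of total correlations, not a maximum of your dual-total-correlation expression. For $X_1=X_2=X_3$ a uniform bit, your expression gives $1/2$ while the capacity is $1$. The two coincide in the PIN model, but the theorem is stated for i.i.d. sources, and outcomes of collective measurements on $n$ copies are not i.i.d.

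The missing idea is to skip simulation and use the total correlation of the classical outcomes as a monotone, which is what the paper does. It needs three ingredients.

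(i) An ideal key held by the $p$ aggregated parties and independent of the transcript $C$ satisfies $I(K_{J_1}:\cdots:K_{J_p}|C)=(p-1)m$.

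(ii) Local processing and public messages, each a function of one party's data, local randomness and the earlier transcript, cannot increase the conditional total correlation. Public shared randomness only averages it. This is the one-shot content of the Csisz\'ar--Narayan converse and needs no i.i.d. assumption.

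(iii) For any $\mathcal P$-local measurement on the pure PEN state, the chain rule $I(X_{J_1}:\cdots:X_{J_p})=\sum_{k=2}^{p}I(X_{J_k}:X_{J_1}\cdots X_{J_{k-1}})$ splits the total correlation into bipartite terms. In the $k$-th term, the reduced state is the product of the pure edges crossing between $J_1\cup\cdots\cup J_{k-1}$ and $J_k$ with states local to either side. Holevo's bound (accessible information is at most the entropy of the reduced state) then caps that term by the sum of $S_e$ over the crossing edges. These edge sets are disjoint and their union is $\mathcal E(\mathcal P)$.

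Applying this to $\rho^{\otimes n}$ gives $(p-1)m\le n\sum_{e\in\mathcal E(\mathcal P)}S_e$, up to a continuity correction for $\varepsilon$-close keys. Your fallback paragraph points in this direction, but the decisive ingredient is the Holevo bound in (iii), not conditional von Neumann entropies behaving classically.
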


The upper bound (\ref{EqPENpurebnd}) is depicted in Fig.~\ref{FigTh}. The proof is based on the estimation of the maximal total correlation (a multipartite generalization of the mutual information) of the outcomes of local measurements (see Appendix \ref{proofThPENpure}).

\begin{remark}
    If $|I|=N$, i.e., all participants are secrecy-seeking, then the bound \eqref{EqPENpurebnd} is tight and achieved by the optimal spanning-tree-packing protocol, see \cite{TrushSpanTrees} and  papers from classical information theory \cite{PIN2010,PerfectOmni}. Namely, for each edge $e$, the distillable bipartite secret key for the state $\ket{\psi_e}$ is equal to $S_e$. Then, to get a conference key from the resulting network of bipartite secret keys, we use the optimal secret key packing. This means that, in this case, collective measurements of states coming from different sources or collective postprocessing do not give advantage: Doing pairwise bipartite QKD and then merging the  bipartite keys into a conference key is an optimal strategy in PEN networks.
\end{remark}

We can generalize Theorem~\ref{ThPENpure} to the case of arbitrary (mixed) PEN states as follows:

\begin{theorem}
    \label{ThPEN}
    Let $\rho_{\vec A}$
    be a PEN state (\ref{EqPEN}). Then
    \begin{equation}
    \label{EqKeyRateEF}
        \overline r_{\rm key}(A_I)\leq
        \min_{\mathcal P}
        \frac{1}{|\mathcal P|-1}
        \sum_{e\in\mathcal E(\mathcal P)}E_{\rm F}^\infty(\rho_e),
    \end{equation}
    where $E_{\rm F}^\infty$ denotes the regularized entanglement of formation and the minimum is taken over $I$-proper partitions $\mathcal P$.
\end{theorem}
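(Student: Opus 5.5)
We reduce the mixed case to Theorem~\ref{ThPENpure} by simulating the mixed PEN state from pure entangled states with asymptotically negligible overhead. Since $E_{\rm F}^\infty$ equals the entanglement cost $E_{\rm C}$, for every $\delta>0$ and large $k$ each edge state $\rho_e^{\otimes k}$ can be prepared approximately by LOCC from $\lceil k(E_{\rm F}^\infty(\rho_e)+\delta)\rceil$ Bell pairs shared between the endpoints of $e$.

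The idea is to argue at the level of the fixed partition $\mathcal P$. We merge every subset $J_\alpha$ into a single ``super-party''. This can only increase the distillable conference key, because collective operations inside $J_\alpha$ are then allowed and one representative of $J_\alpha\cap I$ keeps the key. The resulting network has vertex set $\mathcal P$ and edges $\mathcal E(\mathcal P)$, while intra-subset edges become local resources. Now take the pure PEN state $\tilde\rho$ on the same multigraph in which each cross edge $e$ carries $E_{\rm F}^\infty(\rho_e)+\delta$ Bell pairs per copy. Intra-subset edges need no entanglement at all, since $\rho_e$ can be prepared locally by the super-party.

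The key step is the following claim: any LOSR+PP protocol on $\rho^{\otimes n}$ yields a protocol on $\tilde\rho^{\otimes n}$ achieving the same rate up to vanishing error. The worry is that preparing $\rho_e^{\otimes k}$ from Bell pairs uses LOCC with classical communication, whereas our class only allows LOSR measurements followed by classical postprocessing. I would handle this by teleportation-style simulation. The dilution protocol of Bennett et al.\ consists of a local preparation, Schumacher compression, and teleportation, so its only quantum operations are local, followed by a one-way classical message. A local measurement on the output can be pulled back through the teleportation Pauli correction: the receiver measures the Pauli-conjugated POVM, which the sender effectively specifies by sending the message after the measurement. Equivalently, the receiver measures in a fixed way and relabels the outcomes in classical postprocessing once the teleportation outcome is announced.

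Conditioned on the classical messages, the parties' joint measurement on the Bell pairs is therefore still a product of local POVMs, possibly with shared randomness. The messages themselves go into the register $C$. Their values are independent of the key, being uniform teleportation outcomes, or can be absorbed into Eve's view, since Eve's register for $\tilde\rho$ is trivial. Any such leakage only hurts the simulated protocol, so the upper bound survives. Continuity of the trace distance handles the approximation error of dilution.

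Once this simulation is in place, Theorem~\ref{ThPENpure} applied to $\tilde\rho$ with the partition $\mathcal P$ gives
\[
\overline r_{\rm key}(A_I)\le \frac{1}{|\mathcal P|-1}\sum_{e\in\mathcal E(\mathcal P)}\bigl(E_{\rm F}^\infty(\rho_e)+\delta\bigr).
\]
The rate normalisation per copy of $\rho$ requires blocking $k$ copies, which is harmless by the regularized definition. We then let $\delta\to0$ and minimise over $I$-proper $\mathcal P$.

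The main obstacle is exactly the claim above: making sure the simulation of mixed edges stays within LOSR+PP, with no quantum memory or adaptive quantum operations. If the teleportation-relabelling argument fails for general measurements, I would fall back on inspecting the proof of Theorem~\ref{ThPENpure}. If that proof bounds the total correlation of local outcomes via an entropy inequality valid for separable-decomposition states, then it suffices to replace $S_e$ by $E_{\rm F}(\rho_e)$ using an optimal pure-state decomposition of each edge, treating the decomposition label as information given to Eve. Regularization then gives $E_{\rm F}^\infty$.
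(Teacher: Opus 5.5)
Your main route has a genuine gap at the step you flagged yourself. To conclude that the key rate of $\rho$ is at most that of $\tilde\rho$, you need "dilution followed by $\Lambda$" to be an LOSR+PP protocol on $\tilde\rho$. It is not.

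In dilution the receiver must apply a message-dependent operation before measuring: the Pauli correction $\sigma_j$, then decompression, and for mixed $\rho_e$ also a label-dependent decoding. His effective POVM is therefore $\{V^\dagger\sigma_j^\dagger M_y\sigma_j V\}_y$, which depends on $j$. Measuring one fixed POVM and relabeling afterwards reproduces this only if all these conjugated POVMs are jointly measurable, which fails in general. For a qubit, conjugating the eigenbasis measurement of $(X+Z)/\sqrt2$ by $X$ gives the eigenbasis measurement of $(X-Z)/\sqrt2$, and these two observables anticommute. Guessing $j$ from shared randomness and postselecting succeeds only with probability $d^{-2}$, which is exponentially small in the block length.

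So the composite protocol is genuinely adaptive, i.e.\ it needs the quantum memory that LOSR excludes. Theorem~\ref{ThPENpure} then cannot be invoked, because its proof bounds the total correlation only for non-adaptive $\mathcal P$-local measurements followed by public discussion (Lemmas~\ref{LemMultIacc} and \ref{LemIcond}). Extending it to LOCC preprocessing would need a new argument.

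Your secrecy step is also argued in the wrong direction. Leakage through the dilution messages would make the simulated protocol on $\tilde\rho$ insecure and so break the reduction; it is not harmless. Moreover, in mixed-state dilution the messages contain the decomposition labels, which are correlated with the state. The correct justification is different: the parties' systems together with these messages form an approximate extension of $\rho^{\otimes n}$. Such an extension can be produced from Eve's purification by a channel acting on $E$, so security against $E^nC$ carries over.

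Your fallback is the right idea and is essentially the paper's proof, following the Koashi--Winter idea. Fix pure-state decompositions of each $\rho_e^{\otimes n}$. Let Eve measure her purification in the label basis; this is a channel on $E$ alone, so it cannot increase the distance to the ideal state. Bound the key for each label sequence by the pure-state result, then minimize over decompositions and regularize.

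What you leave out is that, conditioned on a label sequence $\mathbf x$, the parties hold a pure PEN state that is not i.i.d. The asymptotic statement of Theorem~\ref{ThPENpure} therefore does not apply directly. The paper handles this by restricting to typical $\mathbf x$ and obtaining $\ket{\psi_{\mathbf x}}$ from $\ket{\psi_T}^{\otimes M'}$ by local permutations and discarding.

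Alternatively, Lemma~\ref{LemMultIacc} is one-shot and holds for any pure PEN state on a multigraph. You could therefore bound $I(K_{J_1}:\ldots:K_{J_p}|C\mathbf x)$ directly by the label average of $\sum_{e\in\mathcal E(\mathcal P)}S_{\rm ent}(\ket{\psi_{x^{(e)}}})$. The $\varepsilon$-approximate key is then handled with a continuity bound for conditional entropies.
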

The proof can be found in Appendix~\ref{proofThPEN}.
Recall that the entanglement of formation $E_{\rm F}(\rho_{AB})$ of the bipartite state $\rho_{AB}$ is defined as

\begin{equation}
    E_{\rm F}(\rho_{AB})
    =\inf\big\{\sum_\alpha p_\alpha S_{\rm ent}(\ket{\psi_\alpha}_{AB})\big\},
\end{equation}
where the infimum is taken over decompositions
\begin{equation}
\label{EqRhoABdecompose}
    \rho_{AB}=\sum_\alpha p_\alpha\ket{\psi_\alpha}_{AB}\!\bra{\psi_\alpha}
\end{equation}
and $S_{\rm ent}(\ket{\psi_\alpha}_{AB})$ denotes the entanglement entropy of the pure state $\ket{\psi_\alpha}_{AB}$. The regularized entropy of formation is defined as
\begin{equation}
    E_{\rm F}^\infty(\rho_{AB})
    =\lim_{n\to\infty}\frac1n
    E_{\rm F}(\rho_{AB}^{\otimes n}).
\end{equation}

Note that, in the bipartite scenario, it is known that the entanglement cost (equal to the regularized entanglement of formation) is an upper bound for the distillable key \cite{KoashiWinter2004,ChristandlThesis}. Appendix~\ref{proofThPEN} contains another  proof of this fact, which is based on typical sequences (sketched in Ref.~\cite{KoashiWinter2004}). 

Finally, let us make the following simple observation
about the distillable conference key for the case when the network graph is a tree, i.e., contains no cycles. 

\begin{proposition}
\label{PropTree}
    Let the network graph $(\mathcal V,\mathcal E)$ be a tree and $r_e$, $e\in\mathcal E$, be the bipartite distillable key for the (general) state $\rho_e$. Then, the distillable conference key
    $\overline r_{\rm key}(A_I)$ is given by 
    \begin{equation}
        \overline r(A_I)=
        \min_{e\in\widetilde{\mathcal E}}
        r_e,
    \end{equation}
    where $\widetilde{\mathcal  E}$ is the set of edges in the minimal subtree $(\widetilde{\mathcal  V},\widetilde{\mathcal  E})$ of the tree graph $([N],\mathcal E)$ for which $I\subset\widetilde{\mathcal V}$ (see Fig.~\ref{FigTree}). 
\end{proposition}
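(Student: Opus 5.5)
The proof splits into an achievability part and a converse, each handled separately.

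For achievability, we run a bipartite key distillation on each edge $e\in\widetilde{\mathcal E}$ independently; this is allowed within LOSR+PP because each party measures only its own halves of the edge states. Each such distillation produces a bipartite secret key at rate arbitrarily close to $r_e$, so every edge of the subtree yields key at rate at least $r_{\min}:=\min_{e\in\widetilde{\mathcal E}} r_e$. These keys are then merged along the tree by one-time-pad relaying, as in the triangle example after Theorem~\ref{DWbound}. We fix a root $v_0\in I$ and let the conference key $K$ be the root's key (truncated to rate $r_{\min}$) on one incident edge. Proceeding outward from the root, each vertex $v$ that already knows $K$ announces $K\oplus K_{vw}$ publicly for every child edge $(v,w)$, and $w$ then recovers $K$. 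Helper vertices in $\widetilde{\mathcal V}\setminus I$ take part in this relaying; because helpers are trusted, they are permitted to learn $K$. Since the edge keys are independent, uniform and secret from Eve, the announcements form one-time pads with fresh keys and reveal nothing about $K$. The $\varepsilon$-closeness parameters add up over the finitely many edges, which gives $\overline r(A_I)\ge r_{\min}$.

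For the converse, we fix $e^*\in\widetilde{\mathcal E}$ and remove it from the tree. By minimality of the subtree, both resulting components contain a vertex of $I$: if one side contained no vertex of $I$, deleting that side's part of $\widetilde{\mathcal V}$ would leave a smaller subtree still containing $I$. So $\{e^*\}$ is an $I$-proper cut. As in the proof of Proposition~\ref{PropCut}, we merge each side into one aggregated party with unrestricted collective (LOCC) operations. This only enlarges the set of allowed protocols, and any conference key for $I$ becomes a bipartite key across the cut. Across this bipartition the state is $\rho_{e^*}\otimes\tau$, where $\tau=\bigotimes_{e\neq e^*}\rho_e$ has each factor supported entirely on one side, together with Eve's purification. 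Hence $\tau$ is a product of states local to the two aggregated parties, with Eve holding purifications that factorize edge by edge.

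The main obstacle is to show rigorously that such locally held extra states $\rho_e$ (with $e\ne e^*$), whose purifications are held by Eve, cannot increase the bipartite distillable key of $\rho_{e^*}$. The cleanest route is that the aggregated party can simulate each of these states. Because Eve's purifying systems for different edges are independent tensor factors, the aggregated party can prepare its own copy of $\ket{\psi_{e,E_e}}$ locally and hand the purifying register to Eve. The resulting joint state of the honest parties and Eve is then exactly $\rho_{e^*,E_{e^*}}^{\otimes n}\otimes(\text{locally prepared pure states with extra registers given to Eve})$. Giving Eve additional systems can only reduce the achievable key, so any protocol on the full network is also a protocol on $\rho_{e^*}$ alone, with Eve's side information at least as large. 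Therefore the key rate is at most the bipartite distillable key of $\rho_{e^*}$ under LOCC, i.e. at most $r_{e^*}$. For this to hold, $r_e$ must be read as the distillable key under the same LOSR+PP protocols used in the achievability part, or the two notions must be shown to coincide. This matching of $r_e$ between the two directions is the point to state carefully. Minimizing over $e^*\in\widetilde{\mathcal E}$ gives the upper bound, which matches achievability.
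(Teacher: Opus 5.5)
Your proposal is correct and follows the same route as the paper. The converse deletes an edge of $\widetilde{\mathcal E}$ to obtain an $I$-proper cut and invokes the bipartite-key bound from Proposition~\ref{PropCut}. The achievability runs every link of the minimal subtree at rate $\min_{e\in\widetilde{\mathcal E}} r_e$ and merges the pairwise keys into a conference key. Your one-time-pad relaying and your simulation argument for the locally held edge states spell out steps the paper leaves implicit. If you carry out the aggregation within the measure-then-postprocess class instead of LOCC, the same simulation argument yields exactly the LOSR+PP value of $r_{e^*}$, which removes the consistency issue you flag.
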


\begin{proof}
    From one side, we can repeat the proof of Proposition~\ref{PropCut} that the distillable conference key cannot increase the bipartite distillable key for any vertex bipartition. Removal of any edge breaks a tree into two disconnected parts. $I$-proper bipartitions correspond to removal of the edges from $\widetilde{\mathcal E}$. Hence, $\overline r(A_I)\leq r_e$ for all $e$ defining an $I$-proper bipartition. 
    
    From the other side, if $r_{e^*}$ is the minimal bipartite distillable key among the edges from $\widetilde{\mathcal E}$, then all other bipartite links from $\widetilde{\mathcal E}$ can work at least with the same distillation rate providing the conference distillation rate $r_{e^*}$.
\end{proof}

\begin{figure}
       \begin{tikzpicture}[node distance=7mm, bend angle=20]

\tikzstyle{vertex}=[circle,draw,inner sep=0pt,minimum size=5mm]

\tikzstyle{mainvertex}=[circle,draw,inner sep=0pt,minimum size=5mm,ultra thick,font=\bfseries,fill=lime]

        \node [mainvertex] (a6)  {6};
        \node [mainvertex] (a4) [left=of a6, yshift=6mm] {4};
        \node [vertex] (a5) [below=of a4] {5};
        \node [vertex] (a1) [left=of a4, yshift=6mm] {1};
        \node [vertex] (a2) [below=of a1] {2};
        \node [vertex] (a3) [left=of a5] {3};
        \node [vertex] (a7) [right=of a6] {7};
        \node [mainvertex] (a8) [right=of a7, yshift=7mm] {8};
        \node [mainvertex] (a9) [right=of a7, yshift=-7mm] {9};
        \draw (a1) -- (a4);
        \draw (a2) -- (a4);
        \draw (a3) -- (a5);
        \draw [ultra thick] (a4) -- (a6);
        \draw (a5) -- (a6);
        \draw [ultra thick] (a6) -- (a7);
        \draw [ultra thick] (a7) -- (a8);
        \draw [ultra thick] (a7) -- (a9);
    \end{tikzpicture}
    \caption{Tree network. The subgraph induced by the vertices 4 and 6--9 constitutes the minimal tree containing the picked vertices. Then the distillable conference key is equal to the minimal bipartite distillable key in this ``shortened tree''. Vertices 1, 2, 3, and 5 do not connect any picked vertices, hence, they cannot assist the distillation task and their edges are unimportant.}
    \label{FigTree}
\end{figure}

\begin{remark}
\label{RemHelpers}
Let us recall that we assumed the helpers to be trusted, i.e., they may also know the conference key or a part of it. Let us now consider the case of untrusted helpers who are not allowed to get any information about the conference key. Of course, the derived upper bounds are still valid in the case of increased restrictions. However, even in the networks like in Fig.~\ref{FigTree}, where the subsets $\{4,6\}$ and $\{8,9\}$ of the secrecy-seeking parties are separated by the helper node 7, generation of a conference key which is private also to helpers can be possible. In the classical case this is obviously impossible because any communication between these two subsets flows through the untrusted node 7. In the quantum case, the parties can distill a certain number of Bell pairs from all necessary bipartite links (under the condition that it is possible) and then the node 7 performs GHZ measurements creating GHZ correlations between the nodes 6, 8, and 9. This allows the secrecy-seeking parties to establish a conference key private also from the node 7. The node 6, 8 and 9 can verify the GHZ-type correlation such that node 7 cannot cheat. 

Thus, in the presence of untrusted nodes, genuinely multipartite quantum conference key agreement protocols (rather than putting together the bipartite ones) might be necessary. This aligns with the example of the advantage from the use of genuinely multipartite protocols from Ref.~\cite{Epping2017}. A more detailed analysis of conference key agreement in pair-entangled networks with untrusted nodes is a subject for future research. 

\end{remark}

\section{Conclusions and discussion}

\label{SecConcl}

We have studied the task of conference key agreement in pair-entangled networks (PEN), where the global network state consists of bipartite entangled states shared along the edges of a network graph. Networks we considered are memory-free: Each node measures its incoming quantum systems immediately after possible local operations with shared randomness (LOSR) and uses public classical communication afterwards.

We derived  an upper bound on the multipartite Devetak-Winter key rate implying that a simple pairwise QKD strategy saturates the bound on fully connected graphs. This aligns with broader evidence that highly resourceful multipartite states are hard to create in LOSR networks (see Theorem \ref{DWbound}). 

Next, we showed that for the multipartite BB84 protocol in a 3-node PEN, no LOSR state generated in the network outperforms the best biseparable resource (see Theorem \ref{BB84bound}). This already signals that straightforward GHZ-based approaches yield no rate advantage in PENs.

We then considered the most general conference key distillation protocols not requiring a central distinguished node and have derived upper bounds depending on the network topology and degree of entanglement of the source. Theorem~\ref{ThPENpure} gives us an upper bound for pure PEN states and Theorem~\ref{ThPEN} generalizes this bound to the case of mixed PEN states. These bounds are based on simple partition structures of the network and they have been shown to be stronger than previously known cut-based bounds.

Moreover, in the case of pure states and all nodes being the secrecy-seeking parties (no helpers), we have shown that the obtained bounds are tight and can be achieved by individual measurements followed  by the spanning-tree-packing algorithm of conference key propagation from previous work \cite{TrushSpanTrees} and also known in classical network information theory \cite{PIN2010,PerfectOmni}. In particular, this means that collective measurements and collective postprocessing do not give an advantage in this case and, although, in principle, many different multipartite strategies might be considered in a quantum network, the optimal approach is remarkably simple.

We have also provided a calculation of the relative entropy of genuine multipartite entanglement and showed that for pure pairwise entangled network states it is reduced to the minimal entanglement entropy of vertex bipartitions (Theorem~\ref{PropDasCut}). This result can be viewed as a multipartite generalization of the well-known fact that the relative entropy of (bipartite) entanglement of a pure bipartite state is equal to its entanglement entropy.

Two open questions can be suggested for the future. First, in the case of a pure PEN state, but in the presence of helpers, is the strategy of bipartite distillation protocols followed by merging the bipartite keys into a conference one still optimal? 
The second open question is whether the entropy of formation in Theorem~\ref{ThPEN} can be replaced by other upper bounds for bipartite distillable key, e.g., the (bipartite) relative entropy of entanglement or squashed entanglement.

\section*{Acknowledgments}

We thank Giacomo Carrara and Tulja Varun Kondra for helpful discussions. This work was funded by the Federal Ministry of Research, Technology and Space BMFTR (Project QuKuK, Grant No.~16KIS1618K). J.N., H.K., and D.B. also acknowledge support by Deutsche Forschungsgemeinschaft
(DFG, German Research Foundation) under Germany’s
Excellence Strategy -- Cluster of Excellence Matter and
Light for Quantum Computing (ML4Q) EXC 2004/1 --
390534769. A.T. acknowledges support by BMFTR (Project QR.N, Grant No.~16KIS2196).


\appendix

\fancyhf{}
\fancyhead[R]{\thepage}
\fancyhead[L]{\leftmark}

\section{Proof of Theorem~\ref{DWbound} }
\label{proofDWbound}
\begin{proof}[Proof of Theorem~\ref{DWbound}]
Consider a network with bipartite source states and local operations. Then
\begin{align}
    r_{DW} &\le \min_{i=2, \dots, N}\{I(A_1:A_i)\} \\
    &\le \frac{\sum_{i=1}^{N-1} I(A_1:A_i)}{N-1}\\
    & \le \frac{H(A)}{N-1},
\end{align}
where in the last step we use the result from \cite{chaves2015}. The tightness can be shown by performing a bipartite QKD strategy, where the party $A_1$ generates a perfect private key with the other parties separately and combines them into one single shared conference key. 
\end{proof}

\section{Proof of Theorem~\ref{BB84bound}}
\label{proofBB84bound}
The asymptotic key-rate in the BB84-protocol can be written as
\begin{align*}
     r_{\text{BB84}}=1-&h\left(\frac{1-\braket{X_AX_BX_C}}{2}\right)-
     \\ &\max\left\{h\left(\frac{1-\braket{Z_AZ_B}}{2}\right), h\left(\frac{1-\braket{Z_AZ_C}}{2}\right)\right\}
\end{align*} 
where $h(x)$ is the binary entropy \cite{Grasselli2018}.
\\
Here we use the shorthand notation for Pauli strings e.g. $X_AX_BX_C=\sigma_x\otimes \sigma_x\otimes \sigma_x$.
To derive an upper bound for the asymptotic conference key-rate we can make use of the inflation technique, a tool that allows us to derive restrictions based on the network structure \cite{infaltion1, inflation2}. It was shown in \cite{infaltion1} that in a PEN with 3 nodes that the constraint
\begin{align*}
    |\braket{Z_AZ_B}|+ |\braket{Z_AZ_C}|\le 1+|\braket{Z_B}|\ |\braket{Z_C}|
\end{align*}
has to be met. 
Together with the uncertainty relation (see \cite{guehne2005}) $\braket{X_AX_BX_C}^2+\braket{Z_i}\braket{Z_j}\le 1$ with $i,j \in \{A,B,C\}$ we obtain

\begin{align*}
    \braket{X_AX_BX_C}^2+2 \min\{\braket{Z_AZ_B}, \braket{Z_AZ_C} \} \le 2.
\end{align*}
This is an improved condition of \cite{Hansenne2022} for the preparability of a state in a network with bipartite source states.
With that we get the following upper bound for the asymptotic key-rate
\begin{align*}
    r_{\text{BB84}}&\le 1-h\left( \frac{1}{4}\right) \approx 0.188,
\end{align*}
which turns out to be a tight bound. For instance, the state 
\begin{equation*}
    \rho\!=\!\frac{1}{2}\!\left( \ket{\phi^+} \bra{\phi^+}_{AB}\!\otimes \ket{+} \bra{+}_C+ \ket{\phi^+}\bra{\phi^+}_{AC}\!\otimes \ket{+} \bra{+}_B\right)
\end{equation*}
gives rise to a key-rate of $1-h\left( \frac{1}{4}\right)$ (see \cite{Carrara2021}).

\section{Proof of Theorem~\ref{PropDasCut}}
\label{SecDasCutProof}

The proof follows (generalizes) one of the proofs that the relative entropy of (bipartite) entanglement for pure states is equal to the entanglement entropy given in Ref.~\cite{VedralPlenio98}. But a difference with respect to the bipartite case is that we need to consider different cuts of a multipartite entangled state into a product of two parts. Our generalization of the proof from Ref.~\cite{VedralPlenio98} works only for the case when the bipartite pure states $\ket{\psi_e}$ in Eq.~(\ref{EqPENpure}) are maximally entangled. So we start with this case and return to the general case later. The case of the general $\ket{\psi_e}$ can be reduced to the case of the maximally entangled $\ket{\psi_e}$ by the entanglement purification.

\begin{lemma}
\label{LemDasCut}
    The statement of Theorem~\ref{PropDasCut} is true if $\ket{\psi_e}$ in Eq.~(\ref{EqPENpure}) are maximally entangled on the corresponding Hilbert spaces $\mathcal H_e$:
    \begin{equation}
    \label{EqPsieSchmidt}
        \ket{\psi_e}=\ket{\Phi_e}:=
        \frac{1}{\sqrt{d_e}}
        \sum_{n_e=1}^{d_e}
        \ket{n_e}\ket{n_e}
    \end{equation}
    for some bases $\ket{n_e}$ in the corresponding Hilbert spaces.
\end{lemma}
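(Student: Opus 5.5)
The plan is to prove the two inequalities separately, following the Vedral--Plenio argument for pure bipartite states. For a bipartition $(M,\overline M)$ of $[N]$, write $C(M)$ for the set of edges between $M$ and $\overline M$. By Eq.~(\ref{EqPsieSchmidt}), $S_e=\log d_e$. Since $I=[N]$ in the setting of Eq.~(\ref{EqDas}), every cut is $I$-proper. So the target is
\begin{equation*}
\min_{\sigma\in{\rm BISEP}}D(\rho\|\sigma)=\min_{M}\sum_{e\in C(M)}\log d_e ,
\end{equation*}
with the minimum over nontrivial $M$. Any such edge set $C(M)$ contains a cut, and minimal cuts are of this form, so the two minima coincide.

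\emph{Upper bound.} Fix a bipartition $M$ attaining the minimum. I take $\sigma_M$ to be $\rho$ with each cross-edge state $\ket{\Phi_e}\bra{\Phi_e}$, $e\in C(M)$, replaced by its dephased version $\frac1{d_e}\sum_{n_e}\ket{n_e n_e}\bra{n_e n_e}$. Edges inside $M$ or inside $\overline M$ stay local pure states, and the dephased cross edges are separable across $(M,\overline M)$. Hence $\sigma_M$ is separable with respect to this bipartition, so $\sigma_M\in{\rm BISEP}$. Because the state is a tensor product over edges, relative entropy is additive, and $D(\Phi_e\|\text{dephased }\Phi_e)=\log d_e$. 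Therefore $D(\rho\|\sigma_M)=\sum_{e\in C(M)}\log d_e$.

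\emph{Lower bound.} Let $\rho=\ket\psi\bra\psi$ and let $\sigma\in{\rm BISEP}$ be arbitrary, with $D(\rho\|\sigma)$ finite. Then $D(\rho\|\sigma)=-\bra\psi\log\sigma\ket\psi$. Writing $\sigma=\sum_k\lambda_k\ket{k}\bra{k}$, concavity of $\log$ (Jensen applied to the probability weights $|\braket{k|\psi}|^2$) gives
\begin{equation*}
-\bra\psi\log\sigma\ket\psi\ge-\log\bra\psi\sigma\ket\psi .
\end{equation*}
It therefore suffices to show $\bra\psi\sigma\ket\psi\le 2^{-\min_M\sum_{e\in C(M)}\log d_e}$. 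The overlap is linear in $\sigma$, and $\sigma$ is a convex combination of states each separable across some bipartition $M$. Each of these is in turn a mixture of product pure states $\ket{a}_M\ket{b}_{\overline M}$. So the task reduces to bounding $|\braket{\psi|a,b}|^2$. This is at most the largest squared Schmidt coefficient of $\ket\psi$ across $(M,\overline M)$. Across this bipartition, $\ket\psi$ is a local unitary-free product of within-side pure states (each of which is local to one side) and $\bigotimes_{e\in C(M)}\ket{\Phi_e}$. Its Schmidt coefficients squared are thus all equal to $\prod_{e\in C(M)}d_e^{-1}$. This gives the bound for that $M$, and taking the worst case over $M$ yields the claim.

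The main obstacle is the Jensen step and support issues. One must handle eigenvectors of $\sigma$ with zero eigenvalue: if $\ket\psi$ has weight on $\ker\sigma$, then $D=\infty$ and there is nothing to prove. One must also check that ${\rm BISEP}$ is exactly the convex hull of product states across bipartitions, so that the reduction to maximal Schmidt coefficients is legitimate. The rest is bookkeeping of the tensor-product structure over edges.
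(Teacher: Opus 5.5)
Your proof is correct, and it takes a different, shorter route than the paper's.

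The paper follows Vedral--Plenio literally. With the same candidate $\sigma^*$ (cut edges dephased), it computes the directional derivative $f'(0)$ of $x\mapsto D(\rho\|(1-x)\sigma^*+x\sigma)$ via the integral representation of the logarithm. It then shows $f'(0)\geq 0$ for every pure biseparable $\sigma$, and concludes that $\sigma^*$ is a global minimizer by linearity in $\sigma$ and convexity of the relative entropy.

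Since $\sigma^*\ket{\psi}=c\ket{\psi}$ with $c=\prod_{e\in C^*}d_e^{-1}$, Eq.~(\ref{Eqfderiv}) reduces to $f'(0)=1-\bra{\psi}\sigma\ket{\psi}/c$. So the paper needs exactly your overlap inequality $\bra{\psi}\sigma\ket{\psi}\leq\max_M\prod_{e\in C(M)}d_e^{-1}$. The two proofs differ in two places:
\begin{itemize}
\item how this overlap bound becomes an entropy bound: first-order optimality plus convexity, versus your Jensen step $D(\psi\|\sigma)\geq-\log\bra{\psi}\sigma\ket{\psi}$;
\item how the overlap bound itself is proved.
\end{itemize}

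On the second point your argument is actually the more complete one. The paper writes a pure state separable across the bipartition as a product over edges, Eq.~(\ref{Eqphi}), and applies Cauchy--Schwarz edge by edge. A general product vector $\ket{a}_M\ket{b}_{\overline M}$, however, may be entangled across different edge subsystems on the same side. Your largest-Schmidt-coefficient bound covers that case directly.

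The paper's route gives a direct certificate that $\sigma^*$ satisfies the first-order optimality condition, mirroring the bipartite proof it generalizes. Yours is shorter and avoids the operator calculus. It also isolates a general inequality: for a pure state, the relative entropy to ${\rm BISEP}$ is at least $-\log\max_{\sigma\in{\rm BISEP}}\bra{\psi}\sigma\ket{\psi}$. That inequality is tight here precisely because every cross-edge Schmidt spectrum is flat.
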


That is, the basis $\ket{n_e}$ in each Hilbert space is not the computational basis, but the basis specified by the Schmidt decomposition of $\ket{\psi_e}$. For short, we will also write $\ket{n_e}\ket{n_e}=\ket{n_en_e}$.

\begin{proof}
Consider two arbitrary (generally mixed) states $\sigma_{\vec A}$ and $\sigma_{\vec A}^*$ and the function
\begin{equation}
\label{EqfD}
    f(x)=(\ln 2) \,
    D(\rho_{\vec A}
    \, \| \,
    (1-x)\sigma^*_{\vec A}
    +x\sigma_{\vec A}).
\end{equation}
Here $\sigma_{\vec A}^*$ will be our guess for the minimum of the relative entropy, i.e. is treated as a fixed biseparable state and $\sigma_{\vec A}$ is a variable biseparable function. We will be interested in $f'(0)$, which is thus the directional derivative of the quantum relative entropy (up to a constant factor or, equivalently, base of the logarithm) with respect to $\sigma_{\vec A}$ in the point $\sigma^*_{\vec A}$.
For simplicity, we will drop the subindex $\vec A$. 

Using the representation
\begin{equation}
    \ln a=\int_0^\infty
    \frac{at-1}{a+t}
    \cdot
    \frac{dt}{1+t^2},
\end{equation}
it can be shown \cite{VedralPlenio98,AudenaertEisert2011} that the derivative of the function $f(x)$ in the point $x=0$ is 
\begin{equation}
\label{Eqfderiv}
    f'(0)=1-\int_0^\infty
    \Tr\left(
    \frac{1}{\sigma^*+t}
    \,\rho\,
    \frac{1}{\sigma^*+t}
    \,\sigma
    \right)dt
\end{equation}
for $x\in[0,1]$.

Set
\begin{equation}
\label{EqSigmaStar}
    \sigma^*=
    \bigotimes_{e\in C^*}
    \sigma_e^*
    \otimes
    \bigotimes_{e\notin C^*}
    \ket{\psi_e}\bra{\psi_e}
\end{equation}
for some cut $C^*$ (to be fixed later), where
\begin{equation}
\label{EqSigmaStare}
    \sigma^*_e=
    \frac{1}{d_e}
    \sum_{n_e=1}^{d_e}
    \ket{n_en_e}
    \bra{n_en_e}.
\end{equation}
$\sigma^*$ is our guess for minimum for a proper $C^*$. It is straightforward to calculate that 
\begin{equation}
\label{EqDS}
    D(\rho \, \| \, \sigma^*)=
    \sum_{e\in C^*}S_e=
    \sum_{e\in C^*}\log d_e.
\end{equation}
Also,
\begin{equation}
\label{EqSigmaStarFrac}
\begin{split}
    \frac{1}{\sigma^*+t}
    =
    \frac{1}{
    \prod\limits_{e\in C^*}
    d_e^{-1}+t
    }
    &\bigotimes_{e\in C^*}
    \left(
    \sum_{n_e=1}^{d_e}
    \ket{n_en_e}
    \bra{n_en_e}
    \right)
    \\
    \otimes
    &\bigotimes_{e\notin C^*}
    \ket{\Phi_e}\bra{\Phi_e}
\end{split}
\end{equation}
and
\begin{multline}
\label{EqSigmaStarFracs}
    \frac{1}{\sigma^*+t}
    \,\rho\,
    \frac{1}{\sigma^*+t}
    =
    \frac{
    \prod\limits_{e\in C^*}
    d_e^{-1}
    }{
    \big(\prod\limits_{e\in C^*}
    d_e^{-1}+t
    \big)^2
    }
    \\
    \times\bigotimes_{e\in C^*}
    \left(\sum_{n_e,m_e=1}^{d_e}
    \ket{n_en_e}
    \bra{m_em_e}
    \right)
    \otimes
    \bigotimes_{e\notin C^*}
    \ket{\Phi_e}\bra{\Phi_e}
\end{multline}

Let now $\sigma=\ket\varphi\bra\varphi$ be a pure biseparable state. This means that it is separable with respect to a certain vertex partition. Denote the corresponding cut as $C$. Then $\ket\varphi$ can be written as
\begin{equation}
\label{Eqphi}
    \ket\varphi=
    \bigotimes_{e\in C}
    \ket{\alpha_e}\ket{\beta_e}
    \otimes
    \bigotimes_{e\notin C}
    \ket{\varphi_e},
\end{equation}
where $\ket{\varphi_e}$ are arbitrary (generally, entangled) bipartite states from the Hilbert spaces $\mathcal H_e$, and $\ket{\alpha_e}\ket{\beta_e}$ are pure separable states on $\mathcal H_e$:
\begin{equation}
\label{EqAlphaBeta}
    \ket{\alpha_e}=\sum_{n_e=1}^{d_e}
    a_{n_e}^e\ket{n_e},
    \qquad
    \ket{\beta_e}=\sum_{n_e=1}^{d_e}
    b_{n_e}^e\ket{n_e}.
\end{equation}

Substitution of Eqs.~(\ref{EqSigmaStarFracs}), (\ref{Eqphi}), and (\ref{EqAlphaBeta}) into Eq.~(\ref{Eqfderiv}) gives
{}
\begin{equation}
\label{Eqfderiv4}
\begin{split}
    1-f'(0)
    =
    &\bigotimes_{e\in C^*\cap C}
    \sum_{n_e,m_e=1}^{d_e}
    \overline{a_{n_e}^e
    b_{n_e}^e}\,
    a_{m_e}^e b_{m_e}^e
    \\
    \otimes
    &\bigotimes_{e\in C^*\backslash C}
    \sum_{n_e,m_e=1}^{d_e}
    \braket{m_em_e|\varphi_e}
    \braket{\varphi_e|n_en_e}
    \\
    \otimes
    &\bigotimes_{e\in C\backslash C^*}
    \frac{1}{d_e}
    \sum_{n_e,m_e=1}^{d_e}
    \overline{a_{n_e}^e
    b_{n_e}^e}\,
    a_{m_e}^e b_{m_e}^e
    \\
    \otimes
    &\bigotimes_{e\notin C\cup C^*}
    |\braket{\Phi_e|\varphi_e}|^2,
\end{split}
\end{equation}
where we have taken into account that $\int_0^\infty\frac{c\,dt}{(c+t)^{2}}=1$. 

Now estimate the absolute values of the all four groups of terms in Eq.~(\ref{Eqfderiv4}). For the fourth line we apply the bound $|\braket{\Phi_e|\varphi_e}|\leq1$. For the first and the third lines we apply the Cauchy-Bunyakovsky-Schwarz inequality:
{}
\begin{equation}
\begin{split}
    \left|
    \sum_{n_e,m_e=1}^{d_e}
    \overline{
    a_{n_e}^e
    b_{n_e}^e}\,
    a_{m_e}^e b_{m_e}^e
    \right|
    &\leq
    \left(
    \sum_{n_e=1}^{d_e}
    |a_{n_e}^eb_{n_e}^e|
    \right)^2
    \\
    &\leq
    \left(
    \sum_{n_e=1}^{d_e}
    |a_{n_e}^e|^2
    \right)
    \left(
    \sum_{n_e=1}^{d_e}
    |b_{n_e}^e|^2
    \right)
    \\&=1.
\end{split}
\end{equation}
For the second line of Eq.~(\ref{Eqfderiv4}), we consider the maximization problem
\begin{equation}
    \sum_{n_e=1}^{d_e}
    |\braket{n_en_e|\varphi_e}|\to\max
\end{equation}
such that
\begin{equation}
    \sum_{n_e=1}^{d_e}
    |\braket{n_en_e|\varphi_e}|^2\leq1.
\end{equation}
Its solution is $|\braket{n_en_e|\varphi_e}|=1/\sqrt{d_e}$ giving
\begin{equation}
\begin{split}
    \sum_{n_e,m_e=1}^{d_e}
    |\braket{m_em_e|\varphi_e}
    \braket{\varphi_e|n_en_e}|
    &=\left(
    \sum_{n_e=1}^{d_e}
    |\braket{n_en_e|\varphi_e}|
    \right)^2
    \\&\leq
    d_e.
\end{split}
\end{equation}
Putting it all together, we obtain
\begin{equation}
    |1-f'(0)|\leq
    \frac{
    \prod\limits_{
    e\in C^*\backslash C}
    d_e
    }
    {
    \prod\limits_{
    e\in C\backslash C^*}
    d_e
    }
    =
    \frac{
    \prod\limits_{
    e\in C^*}
    d_e
    }
    {
    \prod\limits_{
    e\in C}
    d_e
    }.
\end{equation}
Now let us choose $C^*$ such that
\begin{equation}
    \prod_{e\in C^*}d_e=
    \min_{\text{cuts }C}
    \prod_{e\in C}d_e,
\end{equation}
or, equivalently,
\begin{equation}
    \sum_{e\in C^*}\log d_e=
    \min_{\text{cuts }C}
    \sum_{e\in C}\log d_e.
\end{equation}
Then $|1-f'(0)|\leq1$ and, hence, $f'(0)\geq0$ for all pure biseparable $\sigma$. Since $f'(0)$ is linear in $\sigma$ and every biseparable state is a convex combination of pure biseparable states, the inequality $f'(0)\geq0$ holds for all biseparable $\sigma$. Together with the convexity of the quantum relative entropy in Eq.~(\ref{EqfD}), this means that $\sigma^*$ provides minimum to the quantum relative entropy over all biseparable states. Together with Eq.~(\ref{EqDS}), this means that the statement of Proposition~\ref{PropDasCut} is true if $\ket{\psi_e}$ are maximally entangled states (\ref{EqPsieSchmidt}).

\end{proof}

Consider now the case of a general $\ket{\psi_e}$. Consider again their Schmidt decomposition: 
    \begin{equation}
        \ket{\psi_e}=
        \sum_{n_e=1}^{d_e}
        \sqrt{p_{n_e}^e}
        \ket{n_en_e}.
    \end{equation}
As a guess for the minimum of the relative entropy, we again consider the state $\sigma^*$ (\ref{EqSigmaStar}), where
{}
\begin{equation}
    \sigma^*_e=
    \sum_{n_e=1}^{d_e}
    \sqrt{p_{n_e}^e}
    \ket{n_en_e}
    \bra{n_en_e}
\end{equation}
and $C^*$ is chosen such that
\begin{equation}
    \sum_{e\in C^*}
    S_e
    =
    \min_{\text{cuts }C}
    \sum_{e\in C}
    S_e.
\end{equation}
The direct calculation shows that 
\begin{equation}
\label{EqDSgen}
    D(\rho_{\vec A} \, \| \, \sigma^*_{\vec A})=\sum_{e\in C^*}S_e.    
\end{equation}
So, we need to show that the relative entropy of entanglement does not exceed the right-hand side of Eq.~(\ref{EqDSgen}).

As is well-known \cite{ChristandlThesis,WildeBook}, the distillable (bipartite) entanglement of the pure state $\ket{\psi_e}$ is equal to its entanglement entropy $S_e$. This means that, for every $\varepsilon>0$ and every $\delta>0$, there exists a family of LOCC maps $\Lambda_e^{(n)}$ acting on $n$ copies of $\ket{\psi_e}\bra{\psi_e}$ such that 
{}
\begin{equation}
    \|
    \Lambda_e^{(n)}(\ket{\psi_e}\bra{\psi_e}^{\otimes n})-\ket{\Phi_e}\bra{\Phi_e}^{
    \otimes 
    \lfloor n(S_e-\delta)\rfloor
    }\|\leq
    \frac{\varepsilon}{
    |\mathcal E|
    }
\end{equation}
for all sufficiently large $n$. Composition of $\Lambda_e^{(n)}$ for all $e$ constitutes the map $\Lambda^{(n)}$ acting on $n$ copies of the PEN state $\rho_{\vec A}$, see Eq.~(\ref{EqPENpure}). Then
{}
\begin{equation}
    \|
    \Lambda{(n)}(
    \rho_{\vec A}^{\otimes n}
    )-
    \tau_{\vec A}^{(n)}
    \|\leq\varepsilon,
\end{equation}
where
\begin{equation}
\label{EqPENpuremax}
    \tau_{\vec A}^{(n)}=
    \bigotimes_{e\in\mathcal E}
    \ket{\Phi_e}
    \bra{\Phi_e}^{
    \otimes 
    \lfloor n(S_e-\delta)\rfloor
    }.    
\end{equation}

We have the following chain:
\begin{equation}
\label{EqGMEchain}
\begin{split}
    E_{\rm GME}(\rho_{\vec A})
    &\geq
    \frac1n
    E_{\rm GME}(
    \rho_{\vec A}^{\otimes n}
    )
    \\
    &\geq
    \frac1n
    E_{\rm GME}\big(
    \Lambda^{(n)}(\rho_{\vec A}^{\otimes n})
    \big)
    \\
    &\geq
    \frac1n
    [E_{\rm GME}(
    \tau_{\vec A}^{(n)}
    )-g(\varepsilon,n)]
    \\&=
    \frac1n
    \left[
    \min_C\sum_{e\in C}
    \lfloor n(S_e-\delta)\rfloor -g(\varepsilon,n)
    \right]
    \\
    &\geq\min_C
    \sum_{e\in C}
    \Big(S_e-\delta-\frac1n\Big)
    -\frac{g(\varepsilon,n)}n,
\end{split}
\end{equation}
where 
\begin{equation}
    g(\varepsilon,n)=
    2\varepsilon
    \log\left(
    \prod_{e\in\mathcal E}
    d_e^{
    \lfloor n(S_e-\delta)\rfloor
    }
    \right)
    -2\varepsilon\log\varepsilon
    +4\varepsilon.
\end{equation}
The first inequality in chain (\ref{EqGMEchain}) follows from the definition of the relative entropy of GME (it is the standard regularization procedure). In the second inequality, we have used the monotonicity of the relative entropy of GME $E_{\rm GME}$ under LOCC maps.
The third inequality is the continuity bound for such kind of quantities (relative entropy with respect to a certain class of density operators) \cite{DonaldHoro99}. In the fourth line, we have used the proved statement (Lemma~\ref{LemDasCut}) about the relative entropy of GME for PEN states of form (\ref{EqPENpuremax}) and the minimum is taken over cuts $C$. The last inequality is due to the rounding operation.

Since $\delta$, $\varepsilon$, $g(\varepsilon,n)/n$, and $1/n$ can be made arbitrarily small, we conclude that 
\begin{equation}
    E_{\rm GME}(\rho_{\vec A})\geq
    \min_C \sum_{e\in C}S_e.
\end{equation}
Together with Eq.~(\ref{EqDSgen}), this gives the required statement.

\section{Proof of Theorem~\ref{ThPENpure}}
\label{proofThPENpure}

Since we consider pure $\rho_{\vec A}$, there is now Eve's register $E$ in Eq.~(\ref{EqKeyRate}); the only Eve's knowledge is classical communication (and shared randomness), i.e., register $C$. Consider first the case of no shared randomness, i.e., no $\lambda$. For an arbitrary partition $\mathcal P$ (\ref{EqP}), denote the maximal total correlation of the outcomes of local measurements 
\begin{equation}
\label{EqMultIacc}
\begin{split}
    I_{\rm loc}(\mathcal P)&\equiv
    I_{\rm loc}(A_{J_1}:
    A_{J_2}:
    \ldots
    \colon
    A_{J_p})\\&=
    \max 
    I(X_{J_1}:
    X_{J_2}:\ldots:
    X_{J_p}),
\end{split}
\end{equation}
where the maximum is taken over $\mathcal P$-local POVMs (i.e., collective measurements inside the subsets $J_\alpha\in\mathcal P$ are allowed) and $X_{J_\alpha}$, $\alpha=1,\ldots,p$, denote the random variables corresponding to the measurement outcomes. Here
\begin{equation}
    I(X_{J_1}:\ldots:
    X_{J_p})=
    \sum_{\alpha=1}^p
    H(X_{J_1})
    -H(X_{J_1}\ldots X_{J_p})\geq0
\end{equation}
is the total correlation (a multipartite generalization of the mutual information) and $H$ denotes the Shannon entropy of a random variable.

If we include shared randomness $\lambda$, we will be interested in the corresponding conditional quantity conditioned on $\lambda$:
\begin{equation}
\label{EqMultIaccL}
\begin{split}
    I_{\rm loc}(\mathcal P|L)&\equiv
    I_{\rm loc}(A_{J_1}:
    A_{J_2}:\ldots:
    A_{J_p}|L)\\&=
    \max 
    I(X_{J_1}:
    X_{J_2}:\ldots:
    X_{J_p}|L),
\end{split}
\end{equation}
where the maximum is taken over all $\mathcal P$-local measurements with shared randomness and $L$ is the shared randomness as a random variable (in contrast to $\lambda$, which denotes a concrete value of it). Here,
\begin{multline}    
\label{EqMultIC}
    \!I(X_{J_1}:\!
    \ldots:
    \!X_{J_p}|C)\!=\!\sum_{\alpha=1}^p
    H(X_{J_1}|C)
    -H(X_{J_1}\!\ldots X_{J_p}|C)\\
    =
    \sum_{c\in\mathcal C}
    P(C=c)
    I(X_{J_1}\colon
    \ldots
    \colon
    \!X_{J_p}|C=c)
\end{multline}
is the conditional total correlation for an arbitrary random variable $C$ taking values on the set $\mathcal C$. In our case this is classical communication, which includes the public randomness. Due to the last line in Eq.~(\ref{EqMultIC}), the maximization in Eq.~(\ref{EqMultIaccL}) is reduced to the maximization over $\mathcal P$-local POVMs without shared randomness. In other words, shared randomness does not give advantage for the maximization of information. So, in the following, we will consider the maximal total correlation (\ref{EqMultIacc}) without shared randomness.

Consider first the case of two parties $A_1=A$ (``Alice'') and $A_2=B$ (``Bob'') and a pure bipartite state $\ket{\psi_{AB}}$. Denote $\rho_A=\Tr_B\ket{\psi_{AB}}\bra{\psi_{AB}}$ and $\rho_B=\Tr_A\ket{\psi_{AB}}\bra{\psi_{AB}}$. Denote also $S(\varrho)$ the von Neumann entropy of an arbitrary density operator $\varrho$.

The fact from the following lemma is well-known since it is actually a reformulation of the statement that the accessible information for an ensemble of quantum states cannot be greater than the Holevo information of this ensemble \cite{HolevoBook}. 
\begin{lemma}
\label{LemMLDbipartite}
Let $\ket{\psi_{AB}}$ be a pure bipartite state. Then
\begin{equation}
\label{EqBipartitePiani}
    I_{\rm loc}(A:B)=S(\rho_A),
\end{equation}
\end{lemma}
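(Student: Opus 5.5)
We prove the equality by showing the two inequalities separately; throughout, $I_{\rm loc}(A:B)$ is the maximum of the classical mutual information $I(X:Y)$ over local POVMs $\{M^A_x\}$ on $A$ and $\{M^B_y\}$ on $B$, as in Eq.~(\ref{EqMultIacc}) with $p=2$.

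\textbf{Achievability} ($\geq$). Write the Schmidt decomposition $\ket{\psi_{AB}}=\sum_n\sqrt{p_n}\ket{n}\ket{n}$. Let both parties measure in their Schmidt bases. The outcomes then satisfy $X=Y$, with $P(X=n)=p_n$. Hence
\[
I(X:Y)=H(X)=-\sum_n p_n\log p_n=S(\rho_A).
\]

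\textbf{Converse} ($\leq$). This is a reduction to the Holevo bound. Fix local POVMs $\{M^A_x\}$ and $\{M^B_y\}$. When Alice obtains outcome $x$, which happens with probability
\[
p_x=\Tr\big(M^A_x\rho_A\big),
\]
Bob's system is left in the conditional state
\[
\rho^x_B=\frac{1}{p_x}\Tr_A\!\Big[\big(M^A_x\otimes\mathbb 1\big)\ket{\psi_{AB}}\bra{\psi_{AB}}\Big].
\]
These states form an ensemble with average $\sum_x p_x\rho^x_B=\rho_B$. Bob's outcome $Y$ is the result of the measurement $\{M^B_y\}$ applied to this ensemble. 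By Holevo's theorem,
\[
I(X:Y)\le \chi\big(\{p_x,\rho^x_B\}\big)=S(\rho_B)-\sum_x p_x S(\rho^x_B)\le S(\rho_B).
\]
Since $\ket{\psi_{AB}}$ is pure, $S(\rho_B)=S(\rho_A)$, which closes the converse.

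\textbf{Shared randomness.} This does not alter the conclusion. By Eq.~(\ref{EqMultIC}), conditioning on $L$ gives a convex combination of quantities of the same form, each already bounded by $S(\rho_A)$.

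\textbf{Main obstacle.} There is no genuine difficulty here. The only point requiring care is the identification of the ensemble induced on $B$ by Alice's measurement (the steering step). One must check that it averages to $\rho_B$ for an arbitrary POVM, not only for projective measurements. This follows directly from $\sum_x M^A_x=\mathbb 1$.
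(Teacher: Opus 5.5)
Your proposal is correct and is essentially the paper's proof. Achievability comes from measuring in the Schmidt bases. For the converse, you treat Bob's conditional states $\rho_B^x$ induced by Alice's POVM as an ensemble averaging to $\rho_B$, bound $I(X:Y)$ by the Holevo quantity of that ensemble, and then bound this by $S(\rho_B)=S(\rho_A)$.

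The paper phrases the converse as a sequential maximization, in which the inner maximum over Bob's POVM is the accessible information of that ensemble; this is the same argument. The paper also handles shared randomness outside the lemma, via the conditional total correlation, just as you remark.
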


\begin{proof}
    The value $S(\rho_A)$ for the maximal mutual information is obviously achieved when both $A$ and $B$ measure in the bases corresponding to the Schmidt decomposition of $\ket{\psi_{AB}}$. So, it suffices to prove that a higher value cannot be achieved. 

    We need to maximize over Alice's and Bob's local POVMs $\{M_x^A\}_{x\in\mathcal X}$ and $\{M_y^B\}_{y\in\mathcal Y}$. Denote $X$ and $Y$ the random variables corresponding to their outcomes. We can express the maximal mutual information as a sequential maximization:
    \begin{equation}
        I_{\rm loc}(A:B)
        =\max_{\{M_x^A\}}
        \max_{\{M_y^B\}}
         I(X:Y),
    \end{equation}
    where Bob's POVM can depend on Alice's one. In other words, for each Alice's POVM, we optimize over Bob's POVMs and obtain a function of Alice's POVM. Then we optimize it over Alice's POVMs.
    
    Let $p_x$ be the probabilities of Alice's outcomes and $\rho_B^{(x)}$ be the corresponding postmeasurement states of Bob,   
    so that $\sum_xp_x\rho_B^{(x)}=\rho_B$. Hence, $I(X:Y)$ maximized over Bob's POVM for a fixed Alice's POVM is nothing else as the accessible information for the ensemble $\{p_x,\rho_B^{(x)}\}_{x\in\mathcal X}$ \cite{HolevoBook}. It is upper bounded by the Holevo quantity, which, it turn, is upper bounded by
    the entropy of the average state of the ensemble $\rho_B$:
    \begin{equation}
        \max_{\{M_y^B\}_y}
         I(X:Y)
         \leq
         S(\rho_B)-
         \sum_{x\in\mathcal X}
         p_x
         S(\rho_B^{(x)})
         \leq
         S(\rho_B).
    \end{equation}
    Since this inequality is true for any Alice's POVM, we conclude that also
    \begin{equation}
        I_{\rm loc}(X:Y)
        \leq
        S(\rho_B)=S(\rho_A).
    \end{equation}
\end{proof}

\begin{corollary}
\label{CorBipartitePianiGen}
    Consider a more general bipartite state:
    \begin{equation}
    \label{EqBipartGen}
        \rho_{AB}
        =\ket{\psi_{A'B'}}
        \bra{\psi_{A'B'}}
        \otimes
        \rho_{A''}\rho_{B''},
    \end{equation}
    i.e., besides the pure (possibly entangled) bipartite state $\ket{\psi_{A'B'}}$, Alice and Bob have their own ``private'' uncoupled states $\rho_{A''}$ and $\rho_{B''}$, respectively, so that $A=(A',A'')$ and $B=(B',B'')$. Then
\begin{equation}
\label{EqBipartitePianiGen}
    I_{\rm loc}(A:B)=S(\rho_{A'}),
\end{equation}
where, as before, $\rho$ with subindices means the partial trace of $\rho_{AB}\equiv\rho_{A'A''B'B''}$ over the rest subsystems. 
\end{corollary}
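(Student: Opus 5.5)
Two inequalities are needed. For the achievability direction, Alice and Bob measure their primed systems $A'$ and $B'$ in the Schmidt bases of $\ket{\psi_{A'B'}}$ and ignore $A''$ and $B''$. The outcomes are then perfectly correlated with distribution given by the Schmidt coefficients, so $I(X:Y)=H(X)=S(\rho_{A'})$, and hence $I_{\rm loc}(A:B)\geq S(\rho_{A'})$.

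For the converse, I will follow the proof of Lemma~\ref{LemMLDbipartite}, optimizing sequentially: first over Bob's POVM for a fixed POVM $\{M_x^A\}$ of Alice acting on $A'A''$, then over Alice's POVM. Alice's outcome $x$ occurs with probability $p_x$. Because the global state is a product of the $A'B'$ part with the private states, Bob's conditional state has the form
\[
\rho_B^{(x)}=\rho_{B'}^{(x)}\otimes\rho_{B''}.
\]
Here $\rho_{B'}^{(x)}\propto\Tr_{A}\big[(M_x^A\otimes \mathbbm{1}_B)(\ket{\psi_{A'B'}}\bra{\psi_{A'B'}}\otimes\rho_{A''}\otimes\rho_{B''})\big]$ restricted to $B'$; equivalently, tracing out $A''$ first gives an effective POVM $\tilde M_x=\Tr_{A''}[(\mathbbm 1\otimes\rho_{A''})M_x^A]$ on $A'$. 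The factor $\rho_{B''}$ does not depend on $x$. The key point is that Alice's private register, and any measurement on it, only reshapes her effective POVM on $A'$; it cannot make Bob's conditional states depend on $x$ through $B''$.

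By the Holevo bound, $\max_{\{M^B_y\}} I(X:Y)\leq \chi(\{p_x,\rho_B^{(x)}\})$. Additivity of the von Neumann entropy under tensor products gives
\[
\chi=S(\rho_{B'}\otimes\rho_{B''})-\sum_x p_x S(\rho^{(x)}_{B'}\otimes\rho_{B''})=S(\rho_{B'})-\sum_xp_xS(\rho_{B'}^{(x)})\leq S(\rho_{B'}).
\]
Since $\ket{\psi_{A'B'}}$ is pure, $S(\rho_{B'})=S(\rho_{A'})$. The bound holds for every Alice POVM, so $I_{\rm loc}(A:B)\leq S(\rho_{A'})$, and together with achievability this gives equality.

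The only step needing care is the product structure of Bob's conditional states, specifically that the $B''$ factor is $x$-independent. This follows directly from the product form of $\rho_{AB}$ in Eq.~(\ref{EqBipartGen}): nothing in the ensemble is correlated with $B''$. Everything else is Lemma~\ref{LemMLDbipartite} verbatim.
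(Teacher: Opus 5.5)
Your proof is correct and the achievability half is the same as the paper's, but your converse takes a somewhat different route. The paper does not re-run the Holevo argument. It observes that appending an uncorrelated local state and then measuring jointly is itself a POVM on the primed system alone. Your $\tilde M_x$ is exactly this construction for Alice, and the paper applies the analogous one to Bob. So every outcome distribution obtainable from $\rho_{AB}$ by local measurements is already obtainable from $\ket{\psi_{A'B'}}$, and Lemma~\ref{LemMLDbipartite} then applies as a black box.

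You use this absorption only on Alice's side, and your argument does not actually need it there. The product form $\rho_{B'}^{(x)}\otimes\rho_{B''}$ and the identity $\sum_x p_x\rho_{B'}^{(x)}=\rho_{B'}$ hold for any POVM on $A'A''$. On Bob's side you keep $B''$ and cancel $S(\rho_{B''})$ from the Holevo quantity by additivity, using that the same factor $\rho_{B''}$ appears in every member of the ensemble.

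The paper's reduction is shorter and more general. It shows the private systems are irrelevant for the entire joint outcome statistics, not just for the mutual information, and it needs no entropy bookkeeping on $B''$. Your version makes explicit why running the proof of Lemma~\ref{LemMLDbipartite} on the whole system $B=B'B''$, with the crude bound $\chi\leq S(\rho_B)$, would only give $S(\rho_{B'})+S(\rho_{B''})$. It also shows exactly how the product structure removes the surplus term.
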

That is, additional ``private'' subsystems uncoupled from the pure bipartite state do not change the left-hand side of Eq.~(\ref{EqBipartitePianiGen}).
\begin{proof}
    The right-hand side of Ineq.~(\ref{EqBipartitePianiGen}) is achieved when Alice and Bob again measure $A'$ and $B'$ in the bases corresponding to the Schmidt decomposition of $\ket{\psi_{A'B'}}$ and perform the trivial measurements (i.e., corresponding to the identity operators) on $A''$ and $B''$. The fact that this bound cannot be overcome follows from Lemma~\ref{LemMLDbipartite}. Indeed, preparation of local subsystems $A''$ and $B''$ uncoupled from the entangled state $\psi_{A'B'}$ and local joint measurements on $A'A''$ and $B'B''$ falls into the general concept of local POVMs on $A'$ and $B'$.
\end{proof}

\begin{lemma}
\label{LemMultIacc}
If $\ket{\psi_{\vec A}}=\bigotimes_{e\in\mathcal E}\ket{\psi_e}$ is a pure PEN state and $\mathcal P$ is a partition of $[N]$ (\ref{EqP}), then
    \begin{equation}
        I_{\rm loc}(A_{J_1}:\ldots :A_{J_p})
        \leq\sum_{e\in\mathcal E(\mathcal P)}S_e.
    \end{equation}
\end{lemma}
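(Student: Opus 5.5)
The plan is to argue by induction on $p=|\mathcal P|$, using the chain rule for the total correlation to split off one partition set at a time. Each split-off piece is then a bipartite mutual information, which Corollary~\ref{CorBipartitePianiGen} bounds. For a fixed $\mathcal P$-local measurement with outcomes $X_{J_1},\ldots,X_{J_p}$, write $X_{<p}=(X_{J_1},\ldots,X_{J_{p-1}})$. The identity
\begin{equation*}
I(X_{J_1}:\ldots:X_{J_p})=I(X_{J_1}:\ldots:X_{J_{p-1}})+I(X_{<p}:X_{J_p})
\end{equation*}
follows directly from the definition: add and subtract $H(X_{<p})$. The base cases are $p=1$, where the total correlation is zero, and $p=2$, which is handled below.

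For the second term, merge $J_1\cup\ldots\cup J_{p-1}$ into one aggregated party $A'$ and keep $A_{J_p}$ as the other party. The product of local POVMs on $J_1,\ldots,J_{p-1}$ is then a particular POVM of $A'$, so $I(X_{<p}:X_{J_p})\le I_{\rm loc}(A_{J_1\cup\ldots\cup J_{p-1}}:A_{J_p})$. With respect to this bipartition, the PEN state is a tensor product of three kinds of factors. The edges crossing between the two sides form a single pure bipartite state $\bigotimes_e\ket{\psi_e}$, whose entanglement entropy is $\sum_e S_e$ by additivity. The edges internal to either side are pure states held entirely by one party, i.e. ``private'' uncoupled subsystems. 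This is exactly the situation of Eq.~(\ref{EqBipartGen}), so Corollary~\ref{CorBipartitePianiGen} gives the bound $\sum_{e\in\mathcal E(J_p,\overline{J_p})}S_e$, summed over the edges between $J_p$ and the remaining sets.

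For the first term, the distribution of $X_{<p}$ is obtained by applying the local measurements of $J_1,\ldots,J_{p-1}$ to the marginal state $\Tr_{A_{J_p}}\rho_{\vec A}$. That marginal is a pure PEN state on the vertices of $J_1\cup\ldots\cup J_{p-1}$, tensored with mixed local states $\Tr_{J_p\text{-side}}\ket{\psi_e}\bra{\psi_e}$ for every edge $e$ from $J_p$ to the rest. These are local and uncoupled from everything else. I will therefore prove a slightly stronger induction hypothesis: the lemma holds for any pure PEN state tensored with arbitrary private local states at the vertices. The private states can be purified by ancillas held by the same vertex. A POVM on the original systems is a POVM on the purified ones, and these purifications create no new edges, so the bound is unaffected. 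The induction hypothesis applied to the partition $\{J_1,\ldots,J_{p-1}\}$ of the remaining vertices then bounds the first term by the sum of $S_e$ over edges crossing between distinct sets among $J_1,\ldots,J_{p-1}$.

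Adding the two bounds counts each edge of $\mathcal E(\mathcal P)$ exactly once, since each such edge either touches $J_p$ or joins two of the other sets. Maximizing over measurements then gives the claim. The main obstacle is keeping the class of states closed under the induction, i.e. correctly handling the mixed marginals left after tracing out $J_p$. The strengthened hypothesis with private local states, together with the local purification argument above (the same reasoning as in the proof of Corollary~\ref{CorBipartitePianiGen}), is meant to resolve exactly this point.
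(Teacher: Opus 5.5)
Your proposal is correct and is essentially the paper's proof: unrolling your induction gives exactly the paper's telescoping decomposition $I(X_{J_1}:X_{J_2})+I(X_{J_1}X_{J_2}:X_{J_3})+\ldots+I(X_{J_1}\ldots X_{J_{p-1}}:X_{J_p})$, with each term bounded via Corollary~\ref{CorBipartitePianiGen} applied to the marginal state on $J_1\cup\ldots\cup J_k$. Your strengthened hypothesis with private local states makes explicit what the paper uses implicitly, namely that these marginals carry mixed local factors from edges to the later partition sets, which the corollary already allows.
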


\begin{proof}
    Let us apply the following decomposition of the multipartite mutual information:

    \begin{equation}
    \label{EqIdecompose}
    \begin{split}
        I(X_{J_1}:X_{J_2}:\ldots:
        X_{J_{p}})
        &=I(X_{J_1}:X_{J_2})
        \\&+I(X_{J_3}:X_{J_1}X_{J_2})
        +\ldots
        \\
        &+I(X_{J_p}:X_{J_1}\ldots X_{J_{p-1}}).
    \end{split}
    \end{equation}
    Thus, we have reduced the multipartite mutual information to a sum of bipartite ones. 
    
    Consider the first term $I(J_1:J_2)$. The corresponding quantum state $\rho_{A_{J_1\cup J_2}}$ has the form~(\ref{EqBipartGen}):
    \begin{equation}
        \rho_{A_{J_1\cup J_2}}=
        \bigotimes_{e\in\mathcal E(J_1,J_2)}
        \ket{\psi_e}\bra{\psi_e}
        \otimes
        \rho^{(J_1,\overline J_2)}
        \otimes
        \rho^{(J_2,\overline J_1)},
    \end{equation}
    where $\mathcal E(J_1,J_2)\subset\mathcal E$ denotes the set of edges whose one endvertex belongs to $J_1$ and the other one belongs to $J_2$ and $\rho^{(J_1,\overline J_2)}$ (analogously $\rho^{(J_2,\overline J_1)}$) denotes a state originating from vertices connecting $J_1$ to vertices from $\overline J_2=[N]\backslash J_2$. That is, we have pure states $\ket{\psi_e}$ for the edges connecting vertices inside the subset $J_1$ and the partial traces of $\ket{\psi_e}\bra{\psi_e}$ for the edges connecting $J_1$ with vertices not from $J_1$ or $J_2$. Hence, in force of Corollary~\ref{CorBipartitePianiGen},
    \begin{equation}
        \max I(J_1:J_2)=\sum_{e\in\mathcal E(J_1,J_2)}S_e
    \end{equation}
    and the maximum is achieved on the measurements of $\ket{\psi_e}$, $e\in\mathcal E(J_1,J_2)$, corresponding to their Schmidt decompositions.

    Analogously,
    \begin{equation*}
    \begin{split}
        \max I(J_3:J_1J_2)&= 
        \sum_{e\in\mathcal E(J_1\cup J_2,J_3)}S_e,\\
        &\ldots\\
        \max I(J_{|\mathcal P|}:J_1\ldots J_{|\mathcal P|-1})&=
        \sum_{e\in\mathcal E(J_1\cup\ldots\cup J_{|\mathcal P|-1},J_{|\mathcal P|})}S_e,
    \end{split}
    \end{equation*}
    where the maxima are achieved on the measurements of $\ket{\psi_e}$ (for disjoint subsets of edges) corresponding to their Schmidt decompositions. Since $\mathcal E(J_1,J_2)$, $\mathcal E(J_1\cup J_2,J_3)$, \ldots, and $\mathcal E(J_1\cup\ldots\cup J_{|\mathcal P|-1},J_{|\mathcal P|})$ constitute a decomposition of $\mathcal E$,
    \begin{equation}
        \max
        I(X_{J_1}:X_{J_2}:\ldots:
        X_{J_{|\mathcal P|}})
        \leq 
        \sum_{e\in\mathcal E}
        S_e.
    \end{equation}
    Since the measurement achieving the maximums for the terms in Eq.~(\ref{EqIdecompose}) are compatible (act nontrivially on disjoint subsets of edges), the upper bound $\sum_{e\in\mathcal E} S_e$ is also achievable (on the measurements corresponding to the Schmidt decompositions of all $\ket{\psi_e}$).    
\end{proof}

\begin{lemma}
\label{LemIcond}
    Consider arbitrary random variables $X_1$, \ldots, $X_p$ and a random variable $C$, which is a function of, e.g., $X_1$ and private local randomness of the user~1. Then
    \begin{equation}
        I(X_1:\ldots:X_p|C)
        \leq
        I(X_1:\ldots:X_p).
    \end{equation}
\end{lemma}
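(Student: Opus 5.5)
The plan is to compute the difference between the unconditional and the conditional total correlation directly from the entropic definition in Eq.~(\ref{EqMultIC}), and then use the functional dependence of $C$ to collapse one term. First I reduce to the case where $C$ is a deterministic function of a single variable. Denote by $R_1$ the private local randomness of user~1, which is independent of $(X_1,\ldots,X_p)$, and set $X_1'=(X_1,R_1)$, so that $C=f(X_1')$.

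\emph{Step 1: replacing $X_1$ by $X_1'$.} On the conditional side the replacement can only increase the total correlation. By the definition,
\begin{equation*}
I(X_1':X_2:\ldots:X_p|C)-I(X_1:\ldots:X_p|C)
=H(R_1|X_1C)-H(R_1|X_1X_2\ldots X_pC)\geq0,
\end{equation*}
since conditioning reduces entropy. On the unconditional side the replacement changes nothing: independence of $R_1$ gives $H(X_1')=H(X_1)+H(R_1)$ and $H(X_1'X_2\ldots X_p)=H(X_1\ldots X_p)+H(R_1)$. Hence $I(X_1':X_2:\ldots:X_p)=I(X_1:\ldots:X_p)$. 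It therefore suffices to prove the lemma with $X_1'$ in place of $X_1$ and $C$ a deterministic function of $X_1'$. I keep writing $X_1$ for $X_1'$ below.

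\emph{Step 2: the key identity.} Using $H(Y)-H(Y|C)=I(Y:C)$ for each entropy term, I obtain
\begin{equation*}
I(X_1:\ldots:X_p)-I(X_1:\ldots:X_p|C)=\sum_{i=1}^p I(X_i:C)-I(X_1X_2\ldots X_p:C).
\end{equation*}

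\emph{Step 3: using that $C$ is a function of $X_1$.} In this case $I(X_1X_2\ldots X_p:C)=H(C)=I(X_1:C)$. The right-hand side above therefore reduces to
\begin{equation*}
\sum_{i=2}^p I(X_i:C)\geq0,
\end{equation*}
which proves the claim.

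I do not expect a real obstacle here. The only subtle point is Step~1, namely that the local randomness enters the conditioning variable but not the total correlation. This is handled by the monotonicity computation together with the independence of $R_1$.
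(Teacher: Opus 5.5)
Your proof is correct and takes essentially the same route as the paper. Expanding both total correlations entropically, the dependence of $C$ on user~1 cancels the joint term against the $X_1$ term, leaving a gap of $\sum_{i\ge2} I(X_i:C)\ge0$; this is exactly the paper's use of $H(X_\alpha|C)\le H(X_\alpha)$ for $\alpha\ge2$. The only difference is that the paper keeps the private randomness and uses the Markov property $H(C|X_1\ldots X_p)=H(C|X_1)$ directly, which makes your Step~1 absorption of $R_1$ into $X_1$ valid but unnecessary.
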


\begin{proof}
    We have
    \begin{equation}
    \label{EqIX1XpC}
        I(X_1:\ldots:X_p|C)
        =\sum_{\alpha=1}^p H(X_\alpha|C)-
        H(X_1\ldots X_p|C).
    \end{equation}
    We can write
    \begin{equation}
    \label{EqHX1C}
    \begin{split}
        H(X_1|C)&=H(X_1C)-H(C)\\&=H(X_1)+H(C|X_1)-H(C).
    \end{split}
    \end{equation}
    Analogously,
    \begin{equation}
    \label{EqHX1XpC}
    \begin{split}
        H(&X_1\ldots X_p|C)\\&=H(X_1\ldots X_p)+H(C|X_1\ldots X_p)-H(C)\\&=H(X_1\ldots X_p)+H(C|X_1)-H(C),
    \end{split}
    \end{equation}
    where we have used that $X_2\ldots X_p$ do not contain additional information on $C$ with respect to $X_1$, hence, $H(C|X_1\ldots X_p)=H(C|X_1)$.

    Substitution of Eqs.~(\ref{EqHX1C}) and (\ref{EqHX1XpC}) as well as the inequalities $H(X_\alpha|C)\leq H(X_\alpha)$ for $\alpha=2,\ldots,p$ into Eq.~(\ref{EqIX1XpC}) gives the statement of the lemma.
\end{proof}

\begin{corollary}
\label{CorIcond}
    Consider arbitrary random variables $X_1$, \ldots, $X_p$, and $C$, and a random variable $C'$, which is a function of, e.g., $X_1$, $C$, and private local randomness of the user~1. Then
    \begin{equation}
        I(X_1:\ldots:X_p|CC')
        \leq
        I(X_1:\ldots:X_p|C).
    \end{equation}
\end{corollary}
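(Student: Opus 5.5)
The plan is to reduce Corollary~\ref{CorIcond} to Lemma~\ref{LemIcond} by conditioning on each value of $C$. Since $C$ is a classical random variable on a finite set $\mathcal C$, the conditional total correlation decomposes as in Eq.~(\ref{EqMultIC}):
\begin{equation*}
    I(X_1:\ldots:X_p|CC')=\sum_{c\in\mathcal C}P(C=c)\,I(X_1:\ldots:X_p|C',C=c),
\end{equation*}
and similarly
\begin{equation*}
    I(X_1:\ldots:X_p|C)=\sum_{c\in\mathcal C}P(C=c)\,I(X_1:\ldots:X_p|C=c).
\end{equation*}
It therefore suffices to prove, for every fixed $c$ with $P(C=c)>0$, the inequality $I(X_1:\ldots:X_p|C',C=c)\leq I(X_1:\ldots:X_p|C=c)$, and then average over $c$.

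Fix such a $c$ and work with the conditional joint distribution of $(X_1,\ldots,X_p,C')$ given $C=c$. By hypothesis, $C'=g(X_1,C,R_1)$, where $R_1$ is private local randomness of user~1. Conditioned on $C=c$, this becomes $C'=g(X_1,c,R_1)$, a function of $X_1$ and $R_1$ alone. The hypothesis of Lemma~\ref{LemIcond} then holds in the conditional world: $C'$ is a function of $X_1$ and user~1's private randomness.

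The key property used in the proof of Lemma~\ref{LemIcond} is $H(C'|X_1\ldots X_p)=H(C'|X_1)$, i.e., the Markov condition $C'-X_1-(X_2,\ldots,X_p)$. So I would check that this Markov chain still holds conditionally on $C=c$. This is the main point needing care, because conditioning on $C$ could in principle correlate $R_1$ with $X_2,\ldots,X_p$. I would argue that $R_1$ is independent of everything else jointly with $(X_1,\ldots,X_p,C)$, since it is fresh private randomness generated by user~1 after $C$ is determined. Then, given $(X_1,C)$, the variable $C'$ is independent of $(X_2,\ldots,X_p)$. This gives $H(C'|X_1\ldots X_p,C=c)=H(C'|X_1,C=c)$.

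With this, the computation in Eqs.~(\ref{EqHX1C})--(\ref{EqHX1XpC}) repeats verbatim with all entropies conditioned on $C=c$. Together with $H(X_\alpha|C',C=c)\leq H(X_\alpha|C=c)$ for $\alpha\geq2$, it yields the per-$c$ inequality, and averaging over $c$ gives the corollary.

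Alternatively, one can avoid the pointwise decomposition by writing every entropy in Eq.~(\ref{EqIX1XpC}) with the extra conditioning on $C$ and applying the chain rule $H(\cdot|CC')=H(\cdot C'|C)-H(C'|C)$ directly; the same Markov condition $H(C'|X_1\ldots X_pC)=H(C'|X_1C)$ is what makes the terms cancel.
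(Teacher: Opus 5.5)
Your proof is correct and is essentially the paper's argument: the paper simply reruns the proof of Lemma~\ref{LemIcond} with $C'$ in place of $C$ and every entropy additionally conditioned on $C$, which is exactly your alternative route, and your pointwise version (fix $C=c$, apply the lemma's computation, then average over $c$) is the same computation written value by value. Your explicit identification of the Markov condition $H(C'|X_1\ldots X_pC)=H(C'|X_1C)$ as the key step, and of the independence of user~1's private randomness that guarantees it, makes precise what the paper uses only implicitly.
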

Such generalization is straightforward: We simply replace $C$ by $C'$ and add conditioning on $C$ in all calculations in the proof of Lemma~\ref{LemIcond}.

Lemma~\ref{LemIcond} and Corollary~\ref{CorIcond} say that the classical multipartite information cannot be increased by local calculations and public communication.

\begin{proof}[Proof of Theorem~\ref{ThPENpure}]

    Consider an apbitrary $I$-proper partition $\mathcal P$ and consider the ideal state (\ref{EqKeyIdeal}). Consider the subsets $J_1,\ldots,J_p\in\mathcal P$ as ``aggregated'' users (i.e., merge users from the same subset together). Then, since each $J_\alpha$ intersects with $I$, the corresponding aggregated user has its own copy of the key. Denote the corresponding random variable $K_{J_\alpha}$. It is straightforward to calculate that
    \begin{equation}
        I(K_{J_1}
        \colon
        K_{J_2}
        \colon
        \ldots
        \colon
        K_{J_p})=(p-1)m.
    \end{equation}
    
    From the other side, from Lemmas~\ref{LemMultIacc}, \ref{LemIcond} and Corollary~\ref{CorIcond}, we conclude that the multipartite information between measurement outcomes $X_{J_1}$, \ldots, $X_{J_p}$ followed by any classical postprocessing with public communication, cannot exceed the right-hand side of (\ref{EqPENpure}), which proves the theorem.
\end{proof}

\section{Proof of Theorem~\ref{ThPEN}}
\label{proofThPEN}

Consider an arbitrary decomposition (\ref{EqRhoABdecompose}). So, Alice's and Bob's density $\rho_{AB}$ can be obtained as  $\Tr_E\ket\psi_{ABE}\!\bra{\psi}$ for the following tripartite state $\ket\psi_{ABE}$ with Eve:
\begin{equation}
    \ket\psi_{ABE}
    =\sum_x\sqrt{p_x}\ket{\psi_x}_{AB}\ket{x}_E,
\end{equation}
where $\ket{x}_E$ are orthonormal vectors in Eve's space.
Suppose that Eve measures her register in the standard basis and announces the result $x$. Of course, the fact of announcement is advantageous for Alice and Bob. Then they know that they share a pure state $\ket{\psi_x}$ in this position. After a large number $n$ rounds, Alice and Bob have a sequence of $\ket{\psi_x}$, where, with a high probability, the number of occurrences of each $x$ is approximately $np_x$. Since the distillable key is additive for a pure states and is equal to the entanglement entropy, we arrive at the conclusion that  distillable key is the entanglement entropy averaged over the ensemble $\{p_\alpha,\ket{\psi_\alpha}\}$. This is an upper bound since it corresponds to a particular Eve's attack (purification) with the announcement of Eve's outcomes to Alice and Bob, which is, of course, advantageous for them.

Minimization over decompositions (\ref{EqRhoABdecompose}) gives $E_{\rm F}(\rho_{AB})$ as an upper bound. The same arguments applied to $n$ rounds give the upper bound $E_{\rm F}(\rho_{AB}^{\otimes n})/n$. Taking the limit $n\to\infty$ gives the regularized entanglement entropy as an upper bound for the distillable key in the bipartite scenario. This idea (for the bipartite case) was suggested in Ref.~\cite{KoashiWinter2004}. We give a detailed rigorous proof.

For simplicity of notations, let us first prove the known statement that the bipartite distillable key is upper bounded by the regularized entanglement of formation based on this intuition. Then we will generalize this proof for the multipartite case.

\begin{lemma}
\label{LemBiKeyEF}
    Let two parties $A$ (Alice) and $B$ (Bob) have a source of bipartite state $\rho_{AB}$ acting on a Hilbert space $\mathcal H_{AB}=\mathcal H_A\otimes\mathcal H_B$ want to establish a secret key. Then the distillable key is upper bounded by $E_F^\infty(\rho_{AB})$.
\end{lemma}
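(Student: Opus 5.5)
We follow the Koashi--Winter intuition outlined just above. Fix $k$ and a decomposition $\rho_{AB}^{\otimes k}=\sum_x p_x\ket{\psi_x}\bra{\psi_x}$ with $\sum_x p_x S_{\rm ent}(\ket{\psi_x})\le E_F(\rho_{AB}^{\otimes k})+\delta$. Grouping $n=k\ell$ copies into $\ell$ blocks of $k$, it suffices to show that the distillable key of $\rho_{AB}^{\otimes k}$ is at most $\sum_x p_xS_{\rm ent}(\ket{\psi_x})$. Dividing by $k$ and letting $k\to\infty$, $\delta\to0$ then gives $E_F^\infty$. So, replacing $\rho_{AB}^{\otimes k}$ by $\rho_{AB}$, we must show $\overline r_{\rm key}\le\sum_xp_xS_x$, where $S_x=S_{\rm ent}(\ket{\psi_x})$.

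\textbf{Step 1 (specific purification and advantage to Alice and Bob).} Since the key rate does not depend on which purification Eve holds (all purifications are related by an isometry on $E$), we take $\ket\psi_{ABE}=\sum_x\sqrt{p_x}\ket{\psi_x}_{AB}\ket x_E$. Eve applies the dephasing channel in the $\ket x$ basis to her system; this does not change the global state restricted to $AB$. We then give the classical string $x^n$ to Alice, Bob, and Eve by copying it into $C$. Any $\varepsilon$-good protocol for the original problem remains $\varepsilon$-good here: Eve's dephasing and the copying are a channel on $E^n$, the trace distance is monotone under it, and the ideal state is mapped to an ideal state. Conditioning on $x^n$ therefore only increases the achievable rate. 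The resulting problem is key distillation from the pure product state $\bigotimes_i\ket{\psi_{x_i}}$ with $x^n$ known to everyone, in which Eve is uncorrelated except through $C$.

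\textbf{Step 2 (typicality and pure-state bound).} With probability at least $1-\varepsilon'$, $x^n$ is strongly typical, so each letter appears about $np_x$ times. For a fixed $x^n$, the state is a pure bipartite state with entanglement entropy $\sum_iS_{x_i}\le n(\sum_xp_xS_x+\delta')$. For pure states, distillable key equals entanglement entropy, which is the relative entropy of entanglement bound used in Proposition~\ref{PropCut}. We would like to use this as a one-shot statement: an $\varepsilon$-good key of length $m$ extracted from a pure state with entropy $S$ satisfies $m\le S+f(\varepsilon)\,(m+1)$ with $f\to0$. To get this, we use monotonicity of the relative entropy of entanglement under LOPC, the fact that $E_R(\text{ideal }m\text{-bit private state})\ge m$, and Donald--Horodecki continuity (as in the chain (\ref{EqGMEchain})). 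Averaging over $x^n$, and handling atypical sequences with the trivial dimension bound weighted by $\varepsilon'$, gives $m/n\le\sum_xp_xS_x+\delta'+O(\varepsilon+\varepsilon')\log\dim$.

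\textbf{Main obstacle.} The delicate part is Step 2: a protocol that is $\varepsilon$-good on average over $x^n$ need not be good for each individual $x^n$. We handle this by applying continuity to the average state $\sum_{x^n}P(x^n)\ket{x^n}\bra{x^n}_C\otimes(\cdot)$, using the flagged-state property that $E_R$ of a classically flagged mixture is at most the average of the $E_R$ of its branches. We also need the $\varepsilon\log\dim$ error term to vanish per copy. Because the key dimension is $2^m$ with $m\le n\log\dim\mathcal H_{AB}$, this term is $O(\varepsilon n)$, which divided by $n$ vanishes as $\varepsilon\to0$. The multipartite version (Theorem~\ref{ThPEN}) would then apply the same argument edge by edge, substituting into Theorem~\ref{ThPENpure}.
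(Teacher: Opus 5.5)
Your outline works, but it follows a genuinely different route from the paper. The paper never uses an entropic one-shot converse. After Eve's dephasing, it writes the error as the $p_{\mathbf x}$-average of the errors of the \emph{same} map $\Lambda$ on the pure states $\ket{\psi_{\mathbf x}}$. For every typical $\mathbf x$, it produces $\ket{\psi_{\mathbf x}}$ from one fixed state $\ket{\psi_T}^{\otimes M'}$ by local permutations and discarding. The purely asymptotic fact that the distillable key of $\ket{\psi_T}$ equals $S_{\rm ent}(\ket{\psi_T})\approx M_0\overline S_{\rm ent}$ then gives, directly from definition (\ref{EqKeyRate}), a uniform lower bound $\varepsilon_1$ on the per-sequence error, and this bound survives averaging over the typical set. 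No private states or continuity bounds appear.

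You instead route everything through $E_R$: LOCC monotonicity, a lower bound on $E_R$ of near-private states, and convexity via your flag property. In your route typicality is superfluous, because the flagged input already satisfies $E_R\le\sum_{x^n}P(x^n)S_{\rm ent}(\ket{\psi_{x^n}})=n\sum_xp_xS_x$ exactly. By convexity of $E_R$ you do not even need Step~1. What you are really proving is the known chain $K_D\le E_R^\infty\le E_F^\infty$ of Horodecki \emph{et al.}, which is stronger than the lemma.

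What the paper's route buys is modularity: the pure-state converse enters only as an asymptotic black box, so the identical argument yields Theorem~\ref{ThPEN} by plugging in Theorem~\ref{ThPENpure}. Your closing claim that the multipartite case goes ``the same way'' does not follow as stated. You have no $E_R$-type monotone reproducing the $\frac{1}{|\mathcal P|-1}\sum_e S_e$ bound; the natural candidate, the GME relative entropy, only gives the weakest cut by Theorem~\ref{PropDasCut}. You would need a one-shot version of the total-correlation argument (Lemmas~\ref{LemMultIacc} and \ref{LemIcond}) or the paper's black-box argument. This is consistent with the paper leaving the replacement of $E_F^\infty$ by $E_R$ in Theorem~\ref{ThPEN} open.

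The one step that fails as written is your continuity estimate. The ideal classical key state $\tau_{K_AK_B}\otimes\rho_C$ is separable, so $E_R\ge m$ holds only for the coherently implemented output including the shield (Naimark ancillas, purified local randomness, copies of the transcript). Donald--Horodecki continuity applied to that state carries $\log$ of the full dimension including the shield. That dimension is not controlled by $2^{2m}$ or by $(\dim\mathcal H_{AB})^n$, so your ``$O(\varepsilon n)$'' is not justified.

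Use instead the shield-independent bound. Suppose $\frac12\|\rho-\gamma_m\|_1\le\varepsilon$ for a private state $\gamma_m=U(\Phi_{2^m}\otimes\sigma')U^\dagger$, where $\Phi_{2^m}$ is the maximally entangled projector on the key registers. Take the twisted projector $P=U(\Phi_{2^m}\otimes\mathbbm{1})U^\dagger$. Then $\Tr P\rho\ge1-\varepsilon$, while $\Tr P\sigma\le2^{-m}$ for every separable $\sigma$ (Cauchy--Schwarz on product vectors, then convexity). Data processing under $\{P,\mathbbm{1}-P\}$ gives $E_R(\rho)\ge(1-\varepsilon)m-h(\varepsilon)$, which is exactly the form of your one-shot inequality. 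Combine this with the fact that an $\varepsilon$-good LOPC key yields a $\delta(\varepsilon)$-approximate private state with $\delta\to0$, and your argument closes.
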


\begin{proof}
    Consider an arbitrary decomposition
    \begin{equation}
        \rho_{AB}^{\otimes n}=\sum_{x\in\mathcal X}p_{x}\ket{\psi_{x}}\bra{\psi_{x}}
    \end{equation}
    for some $n>0$ and a purification
    \begin{equation}
        \ket{\Psi_{ABE}}^{\otimes n}=\sum_{x\in\mathcal X}
        \sqrt{p_{x}}
        \ket{\psi_{x}}_{(AB)^n}
        \ket{x}_{E^n},
    \end{equation}
    where $\{\ket{x}\}$ is a set of orthonormal vectors in $\mathcal H_E^{\otimes n}$.
    Without loss of generality, assume that all $p_x$ are positive. 
    
    For $M\geq1$, denote $\mathbf x=(x_1,\ldots,x_M)\in\mathcal X^M$ (we will also refer to $\mathbf x$ as sequences) and
    \begin{equation}
        p_{\mathbf x}=\prod_{i=1}^M p_{y_{i}},
        \quad
        \ket{\psi_{\mathbf x}}
        =\bigotimes_{i=1}^M
        \ket{\psi_{y_{i}}},
        \quad
        \ket{\mathbf x}
        =\bigotimes_{i=1}^M
        \ket{x_{i}},
    \end{equation}
    Then
    \begin{equation}
    \label{EqPsiABEnM}
        \ket{\Psi}_{ABE}^{\otimes nM}
        =
        \sum_{\mathbf x\in\mathcal X^m}
        \sqrt{p_{\mathbf x}}
        \ket{\psi_{\mathbf x}}
        \ket{\mathbf x}.
    \end{equation}

    Let us consider the distillable key (\ref{EqKeyRate}) for the state $\rho_{AB}^{\otimes n}$, i.e., a block of $n$ rounds we consider now as a one ``aggregated'' round:
    \begin{equation}
        \overline R=
        \lim_{\varepsilon\to0}
        \sup_{L,M,\Lambda}
        \Big\{\frac LM\,\Big|\,
        \frac12
        \|\Lambda(\rho_{ABE}^{\otimes nM})
        -\rho_{K_AK_BE^{nM}C}^{(L),\,\rm ideal}\|
        \leq\varepsilon
        \Big\}.
    \end{equation}
    We are going to prove that  
    \begin{equation}
    \label{EqMeanSbnd}
        \overline R\leq
        \overline S_{\rm ent}=
        \sum_{x\in\mathcal X} 
        p_x S_{\rm ent}(\ket{\psi_x}).    
    \end{equation}
    This means that, for any $\Delta>0$, there exists $\varepsilon>0$ such that, if $L/M>\overline S_{\rm ent}+\Delta$, then
    \begin{equation}
    \label{EqIneqEps0}
        \frac12
        \|\Lambda(\rho_{ABE}^{\otimes nM})
        -\rho_{K_AK_BEC}^{(L),\,\rm ideal}\|
        >\varepsilon_0
    \end{equation}
    for any LOSR+PP map $\Lambda$.
    
    Consider the channel $\mathcal D$ acting on Eve's subsystem which decoheres in the $\{\ket{e_{\mathbf x}}\}$ basis, or, in other words, performs a nonselective measurement in this basis. The map $\mathcal D$ commutes with $\Lambda_M$ (since they act nontrivially on different subsystems) and
    \begin{equation}
        \mathcal D
        \big(\rho_{ABE}^{\otimes nM}\big)
        =\widetilde\rho_{ABE}^{\otimes nM}
        =
        \sum_{\mathbf x\in\mathcal X^M}
        p_{\mathbf x}
        \ket{\psi_{\mathbf x}}
        \bra{\psi_{\mathbf x}}
        \otimes
        \ket{\mathbf x}
        \bra{\mathbf x}
    \end{equation}
    Obviously, the action of $\mathcal D$ on the ideal state gives the ideal state  $\widetilde\rho_{K_AK_BEC}^{(L_M),\,\rm ideal}$ corresponding to $\widetilde\rho_{ABE}^{\otimes nM}$, i.e., uniformly distributed key uncoupled from $E^n$ and $C$, and $E^n$ having the same marginal state as in $\rho_{ABE}^{\otimes nM}$:
    \begin{equation}
        \widetilde\rho_{K_AK_BEC}^{(L_M),\,\rm ideal}
        =
        \tau_{K_AK_B}
        \otimes
        \sum_{\mathbf x\in\mathcal X^M}
        p_{\mathbf x}
        \ket{\mathbf x}
        \bra{\mathbf x}
        \otimes
        \rho_{C|\mathbf x},
    \end{equation}
    where 
    \begin{equation}
        \tau_{K_AK_B}
        =2^{-m}\
        \sum_{k\in\{0,1\}^m}
        \ket{k,k}\bra{k,k}.
    \end{equation}
    
    The action of a CPTP map cannot increase the trace distance, hence, it suffices to prove the property with Ineq.~(\ref{EqIneqEps0}) for $\widetilde\rho_{ABE}$ rather than $\rho_{ABE}$ and the corresponding ideal state.

    Due to the block-diagonal structure of $\widetilde\rho_{ABE}^{\otimes nM}$ and $\widetilde\rho_{K_AK_BEC}^{(L_M),\,\rm ideal}$ with respect to the vectors $\ket{\mathbf x}$, it turns out that
    \begin{multline}
        \|\Lambda(
        \widetilde\rho_{ABE}^{\otimes nM}
        )
        -\widetilde\rho_{K_AK_BEC}^{(L),\,\rm ideal}\|
        \\=
        \sum_{\mathbf x\in\mathcal X^M}
        p_{\mathbf x}
        \big\|\Lambda(
        \ket{\psi_{\mathbf x}}
        \bra{\psi_{\mathbf x}}
        ^{\otimes M}
        )
        -
        \tau_{K_AK_B}
        \otimes
        \rho_{C|\mathbf x}
        \big\|.
    \end{multline}
    It turns out (see below) that it suffices to prove that, for all $\Delta>0$, there exists $\varepsilon_1>0$ and $\overline M>0$ such that, if  $L/M>\overline S_{\rm ent}+\Delta$ and $M\geq\overline M$, then
    \begin{equation}
    \label{EqIneqEps0x}
        \frac12
        \big\|\Lambda(
        \ket{\psi_{\mathbf x}}
        \bra{\psi_{\mathbf x}}
        ^{\otimes M}
        )
        -
        \tau_{K_AK_B}
        \otimes
        \rho_{C|\mathbf x}
        \big\|
        \geq\varepsilon_1
    \end{equation}
    for any LOSR+PP map $\Lambda$ and for any sequence $\mathbf x$ from a high-probability (typical) subset.

    Let us use the following definition of a typical sequence \cite{CsiszarKornerBook}: A sequence $\mathbf y$ is said to be typical (denoted as $\mathbf x\in T_M$) if
    \begin{equation}
        \left|\frac{M_x^{(\mathbf x)}}M -p_x\right|\leq\delta_M
    \end{equation}
    for all $x\in\mathcal X$,
    where $M_x^{(\mathbf x)}$ is the number of occurrences of the symbol $x$ in the sequence $\mathbf x$. As in Ref.~\cite{CsiszarKornerBook}, we will assume that $\delta_M\to0$ and $\sqrt M\delta_M\to\infty$ as $M\to\infty$. Then, the total probability of typical sequences is at least $1-\gamma_M$, where $\gamma_M\to0$ as $M\to\infty$. Also we assume that $\delta_M$ is non-increasing with $M$.

    If the property with Ineq.~(\ref{EqIneqEps0x}) is true for all typical sequences and $M\geq\overline M$, then Ineq.~(\ref{EqIneqEps0}) is also true with 
    \begin{equation}
        \varepsilon_0=(1-\gamma_{\overline M})\varepsilon_1
    \end{equation}
    for $M\geq \overline M$. Then, for $M<\overline M$, the left-hand side of Ineq.~(\ref{EqIneqEps0}) cannot be zero or even be made arbitrarily close to zero because it would lead to the left-hand side equal or arbitrarily close to zero for arbitrary large $M$ by concatenations of the shorter keys. Then, taking the minimum between $(1-\gamma_{\overline M})\varepsilon_1$ and the infimum of left-hand side of Ineq.~(\ref{EqIneqEps0}) for $M<\overline M$ gives $\varepsilon_0$ for which Ineq.~(\ref{EqIneqEps0}) is satisfied for all $M$. 
    
    In order to prove Ineq.~\eqref{EqIneqEps0x} for typical sequences, we need to gather the data of some length $M_0$ such that $\lceil M_0(p_x+\delta_{M_0})\rceil/M_0\approx p_x$. For this purpose, consider the following family of pure states depending on $M_0$:
    \begin{equation}
    \label{EqPsiT}
        \ket{\psi_T}=
        \bigotimes_{x\in\mathcal X}
        \ket{\psi_x}^
        {\otimes\lceil 
        M_0(p_x+\delta_{M_0})
        \rceil}
    \end{equation}
    (here $T$ is for ``typical'').
    This is a pure state, so its distillable key is equal to
    \begin{equation}
    \label{EqPsitSent}
    \begin{split}
        S_{\rm ent}(\ket{\psi_T})
        &=\sum_{x\in\mathcal X}
        \lceil 
        M_0(p_x+\delta_{M_0})
        \rceil
        S_{\rm ent}(\ket{\psi_x})
        \\&=M_0(\overline S_{\rm ent}
        +\eta_{M_0}),
    \end{split}
    \end{equation}
    where $\eta_{M_0}\to0$ as $M_0\to\infty$. We assume that $M_0$ is fixed and large enough to ensure $\eta_{M_0}<\Delta$. That is, there exists $\varepsilon_1>0$ such that, if $L'/M'>M_0(\overline S_{\rm ent}+\Delta)$, then
    \begin{equation}
    \label{EqPsiTineq}
        \frac12
        \|\Lambda'(\ket{\psi_T}
        \bra{\psi_T}^
        {\otimes M'})
        -\rho_{K_AK_BC}^{(L'),\,\rm ideal}\|
        >\varepsilon_1
    \end{equation}
    for any LOSR+PP map $\Lambda'$.
    
    Consider $M>M_0$ and define $M'=\lceil M/M_0\rceil$. For a typical sequence $\mathbf y$ of length $M$, we have
    \begin{equation}
    \label{EqPsiYMineq}
        M_y^{(\mathbf y)}\leq
        M(p_y+\delta_M)\leq
        M'\lceil M_0(p_y+\delta_{M_0})\rceil,
    \end{equation}
    so, the states $\ket{\psi_{\mathbf x}}$ for typical $\mathbf x$ can be obtained from $\ket{\psi_T}^{\otimes M'}$ by permutation and elimination of some subsystems by both parties simultaneously. 
    
    We are ready to prove Ineq.~(\ref{EqIneqEps0x})  with the same $\varepsilon_1=\varepsilon_1(\Delta)$ as from Ineq.~(\ref{EqPsiTineq}) for large enough $M$. Suppose that Ineq.~(\ref{EqIneqEps0x}) is not satisfied for some $L$, $M$, $\mathbf x\in T_M$, and $\Lambda$ and $L/M>\overline S_{\rm ent}+\Delta$. Then consider a  quantum channel $\Lambda'=\Lambda\circ\Phi_{\mathbf x}$ acting on $\ket{\psi_T}^{\otimes M'}$, where $\Phi_{\mathbf x}$ are local permutations and eliminations of subsystems (depending on $\mathbf x$ but not requiring even classical communication). For such choice of the channel $\Lambda'$, in the notations of Ineq.~(\ref{EqPsiTineq}), $L'=L$ and the left-hand sides of Ineqs.~(\ref{EqIneqEps0x}) and (\ref{EqPsiTineq}) coincide. Hence, the left-hand side of Ineq.~(\ref{EqPsiTineq}) is not bigger than $\varepsilon_1$. From the other side, from our assumption of violation of Ineq.~(\ref{EqIneqEps0x}), we have
    \begin{equation}
        \frac{L'}{M/M_0}
        >
        M_0(\overline S_{\rm ent}+\Delta).
    \end{equation}
    We have not yet obtained a contradiction with Ineq.~(\ref{EqPsiTineq}) because a condition for it is $L'/M'>M_0(\overline S_{\rm ent}+\Delta)$, but $M'\geq M/M_0$. However,
    \begin{equation}
    \label{EqLMM}
        \frac{L'}{M'}=
        \frac{L'}{M/M_0}
        -L'
        \frac{M'M_0-M}{MM'}.
    \end{equation}
    By construction, $M\geq M_0(M'-1)$, so, the last term in Eq.~(\ref{EqLMM}) is upper bounded by $(L'/M')(M_0/M)$ and, thus, is infinitesimal as $M\to\infty$. Then,
    \begin{equation}
        \frac{L'_{M'}}{M'}
        >
        M_0(\overline S_{\rm ent}+\Delta)
    \end{equation}
    for large enough $M$. Thus, Ineq.~(\ref{EqPsiTineq}) is violated for large enough $M$. Hence, Ineq.~(\ref{EqIneqEps0x}) also cannot be violated for large enough $M$. This finishes the proof of Ineq.~(\ref{EqMeanSbnd}).

    Minimization over decomposition~(\ref{EqRhoABdecompose}) gives $\overline R<E_{\rm F}(\rho_{AB}^{\otimes n})$. If we remember that we consider a block of $n$ rounds and return to the original definition of the distillable key (\ref{EqKeyRate}), we obtain
    \begin{equation}
        \overline r_{\rm key}\leq\frac 1n E_{\rm F}(\rho_{AB}^{\otimes n}).
    \end{equation}
    Taking the limit $n\to\infty$ gives that the bipartite distillable key cannot be larger than the regularized entropy of formation, or entanglement cost.
\end{proof}
    
\begin{proof}[Proof of Theorem~\ref{ThPEN}]

    A generalization to the multipartite case is straightforward. 
    For simplicity, consider first the case $I=[N]$ (i.e., a conference key is required for all parties) and the finest partition, i.e., each subset in the partition $\mathcal P$ consists from a single element. We are going to prove that
    \begin{equation}
        \bar r_{\rm key}(\vec A)\leq \frac{1}{N-1}\sum_{e\in\mathcal E}E_{\rm F}^\infty(\rho_e).
    \end{equation}

    The proof actually repeats the above prove for the bipartite case. 
    Consider arbitrary decompositions
    \begin{equation}
        \rho_e^{\otimes n}=\sum_{x^{(e)}\in\mathcal x^{(e)}}
        p_{x^{(e)}}
        \ket{\psi_{x^{(e)}}}
        \bra{\psi_{x^{(e)}}}
    \end{equation}
    for every edge and some $n>0$ and a purification
    \begin{equation}
        \ket{\Psi^{(e)}}=
        \sum_{x^{(e)}\in\mathcal X^{(e)}}
        \sqrt{p_{x^{(e)}}}
        \ket{\psi_{x^{(e)}}}
        \otimes
        \ket{x^{(e)}}.
    \end{equation}
    Here $\ket{x^{(e)}}$ are bipartite unit vectors for the corresponding edges and $\ket{x^{(e)}}$ are orthogonal vectors in Eve's space. Denote 
    \begin{gather}
        \vec x=(x^{(e)})_{e\in\mathcal E}\in\prod_{e\in\mathcal E}\mathcal X^{(e)}=\mathcal X, \\
        p_{\vec x}=
        \prod_{e\in\mathcal E}
        p_{x^{(e)}},
        \quad
        \ket{\psi_{\vec x}}
        =\bigotimes_{e\in\mathcal E}
        \ket{\psi_{x^{(e)}}},
        \quad
        \ket{\vec x}
        =\bigotimes_{e\in\mathcal E}
        \ket{x^{(e)}}.
    \end{gather}
    Now for $M\geq1$, we again introduce notations $\mathbf x=(\vec x_1,\ldots,\vec x_M)\in\mathcal X^M$, and 
    \begin{equation}
        p_{\mathbf x}=\prod_{i=1}^M p_{\vec x_{i}},
        \quad
        \ket{\psi_{\mathbf x}}
        =\bigotimes_{i=1}^M
        \ket{\psi_{\vec x_{i}}},
        \quad
        \ket{\mathbf x}
        =\bigotimes_{i=1}^M
        \ket{\vec x_{i}}.
    \end{equation}
    That is, like in the proof for the bipartite case, we again consider sequences $\mathbf x$, but now we simply consider $\vec x$ as their ``letters''. In this notation, Eq.~(\ref{EqPsiABEnM}) is again true and we can repeat all the reasonings. 
    The only difference is as follows. Instead of Ineq.~(\ref{EqMeanSbnd}), we prove
    \begin{equation}
    \begin{split}
    \label{EqMeanSentMulti}
        \overline R&
        \leq
        \sum_{\vec x\in\mathcal X}
        \frac{p_{\vec x}}{N-1}
        \sum_{e\in\mathcal E}
        S_{\rm ent}(\ket{\psi_{x^{(e)}}}).        
        \\&=
        \frac{1}{N-1}
        \sum_{e\in\mathcal E}
        \sum_{
        x^{(e)}\in\mathcal X^{(e)}
        }
        p_{x^{(e)}}
        S_{\rm ent}(\ket{\psi_{x^{(e)}}})
        \\
        &=
        \frac{1}{N-1}
        \sum_{e\in\mathcal E}
        \overline S_e,
    \end{split}
    \end{equation}
    where we have introduced the notation $\overline S_e$ for the sum over $x^{(e)}$.
    Correspondingly, we need to modify Eq.~(\ref{EqPsitSent}). We define now the state $\ket{\psi_T}$ analogously:
    \begin{equation}
        \ket{\psi_T}
        =
        \bigotimes_{\vec x\in\mathcal X}
        \ket{\psi_{\vec x}}
        ^{\otimes
        \lceil
        M_0(p_{\vec x}+\delta_{M_0})\rceil
        }.
    \end{equation}
    According to Theorem~\ref{ThPENpure}, its distillable conference key is upper bounded (actually, is equal, as we know from the main text) by
    \begin{equation}
        \sum_{\vec x\in\mathcal X}
        \frac{
            \lceil 
            M_0(p_{\vec x}+\delta_{M_0})
        \rceil
        }{N-1}
        \sum_{e\in\mathcal E}
        S_{\rm ent}(\ket{\psi_{x^{(e)}}}).
    \end{equation}
    That is, we have replaced $S_{\rm ent}(\ket{\psi_{x}})$ with $\frac1{N-1}\sum_e S_{\rm ent}(\ket{\psi_{x^{(e)}}})$. All other reasonings are repeated without modification (up to replacement of $x$ and $\overline S_{\rm ent}$ with $\frac{1}{N-1}\sum_e\overline S_e$). This finished the proof for the case $I=[N]$ and the finest partition $\mathcal P$:
    \begin{equation}
    \label{EqKeyRateEFfinest}
        \overline r_{\rm key}
        (\vec A)
        \leq
        \frac{1}{N-1}
        \sum_{e\in\mathcal E}
        E_{\rm F}^\infty(\rho_e).
    \end{equation}

    The general case of an arbitrary $I\subset [N]$ and an $I$-proper partition $\mathcal P$ is reduced to it. We can consider the graph contraction corresponding to the partition $\mathcal P$, i.e., the graph where the elements of $\mathcal P$ are vertices and the quantum state corresponding to the edge $(J,J')$ for $J,J'\in\mathcal P$ is
    \begin{equation}
        \rho_{(J,J')}
        =\bigotimes_
        {e\in\mathcal E(J,J')}
        \rho_e.
    \end{equation}
    Here $\mathcal E(J,J')\subset \mathcal E$ is the subset of edges of the original graph where one endvertex belongs to $J$ and the other one belongs to $J'$. Let us relax the restriction of the allowed class of operations to the $\mathcal P$-local ones (and, as before, with shared randomness and classical postprocessing), i.e., nonlocal operations inside the partition subsets are allowed. Then the problem of conference key distillation for this problem is obviously weaker then the problem of conference key distillation for the subset $I$ using the usual LOSR+PP. Thus, application of Ineq.~(\ref{EqKeyRateEFfinest}) to this setting [i.e., replacement of $N$ with $|I|$ and the summation is over $\mathcal E(\mathcal P)$] gives Ineq.~(\ref{EqKeyRateEF}).
\end{proof}

\bibliography{bibliography.bib}

@article{Neumann2025_NoAdvantageLOSR,
  author  = {Neumann, Justus and Kondra, Tulja Varun and Hansenne, Kiara and
             Weinbrenner, Lisa T. and Kampermann, Hermann and G\"uhne, Otfried and
             Bru\ss, Dagmar and Wyderka, Nikolai},
  title   = {No Quantum Advantage without Classical Communication: Fundamental Limitations of Quantum Networks},
  journal = {arXiv preprint},
  year    = {2025},
  eprint  = {2503.09473},
  eprinttype = {arXiv},
  url     = {https://arxiv.org/abs/2503.09473}
}

@article{Epping2017,
  title={Multi-partite entanglement can speed up quantum key distribution
in networks},
  author={Epping, M. and Kampermann, H. and Macchiavello, C. and Bru{\ss}, D.},
  journal={New. J. Phys.},
  volume={19},
  pages={093012},
  year={2017},
  doi={10.1088/1367-2630/aa8487}
}

@article{Carrara2021,
  title={Genuine multipartite entanglement is not a precondition for secure conference key agreement},
  author={Carrara, G. and Kampermann, H. and Bru{\ss}, D. and Murta, G.},
  journal={Phys. Rev. Res.},
  volume={3},
  number={1},
  pages={013264},
  year={2021},
  doi={10.1103/PhysRevResearch.3.013264}
}

@article{Pirandola2020,
  title={General upper bound for conferencing keys in
arbitrary quantum networks},
  author={Pirandola, S.},
  journal={IET Quantum Commun.},
  volume={1},
  number={1},
  pages={22--25},
  year={2020},
  doi={10.1049/iet-qtc.2020.0006}
}

@article{DW2005,
  title={General upper bound for conferencing keys in
arbitrary quantum networks},
  author={Devetak, I. and Winter, A.},
  journal={Proc. R. Soc. A},
  volume={207--235},
  number={4},
  pages={041016},
  year={2005},
  doi={10.1098/rspa.2004.1372}
}

@article{Augusiak2009,
  title={Multipartite secret key distillation and bound entanglement},
  author={Augusiak, R. and Horodecki, P.},
  journal={Phys. Rev. A},
  volume={80},
  number={4},
  pages={042307},
  year={2009},
  doi={10.1103/PhysRevA.80.042307}
}

@article{Vicente2022,
  title={Asymptotic survival of genuine multipartite entanglement in noisy quantum networks depends on the topology},
  author={Contreras-Tejada, P. and Palazuelos, C. and  de Vicente, J.I.},
  journal={Phys. Rev. Lett.},
  volume={128},
  number={22},
  pages={220501},
  year={2022},
  doi={10.1103/PhysRevLett.128.220501},
}

@article{Navascues2020,
  title={Genuine Network Multipartite Entanglement},
  author={Navascu{\'e}s, M. and Wolfe, E. and  Rosset, D. and Pozas-Kerstjens, A.},
  journal={Phys. Rev. Lett.},
  volume={125},
  number={24},
  pages={240505},
  year={2020},
  doi={10.1103/PhysRevLett.125.240505}
}

@article{Kraft2021,
  title={Quantum entanglement in the triangle network},
  author={Kraft, T. and Designolle, S. and Ritz, C. and  Brunner, N. and G{\"u}hne, O. and Huber, M.},
  journal={Phys. Rev. A},
  volume={103},
  number={6},
  pages={L060401},
  year={2021},
  doi={10.1103/PhysRevA.103.L060401}
}

@article{Hansenne2022,
  title={Symmetries in quantum networks lead to no-go
theorems for entanglement distribution and to
verification techniques},
  author={Hansenne, K. and Xu, Z.-P. and Kraft, T. and  G{\"u}hne, O.},
  journal={Nature Comm.},
  volume={13},
  pages={496},
  year={2022},
  doi={10.1038/s41467-022-28006-3}
}

@article{Csiszar2004,
  title={Secrecy Capacities for Multiple Terminals},
  author={Csisz{\'a}r, I. and Narayan, P.},
  journal={IEEE Trans. Inf. Theory},
  volume={50},
  issue={12},
  pages={3047--3061},
  year={2004},
  doi={10.1109/TIT.2004.838380}
}

@book{Diestel,
	author    = {Diestel, R.}, 
	title     = {Graph Theory}, 	
	address   = {Berlin},
        edition = {5th},
	publisher = {Springer},
	year      = {2017}
}

@INPROCEEDINGS{BB84,
author = "C. H. Bennett and G. Brassard", 
title="Quantum
cryptography: public key distribution and coin tossing",
year="1984",
booktitle="Proc. IEEE Int. Conf. Computers, Systems
and Signal Processing",
publisher="Institute of Electrical and
Electronics Engineers",
address="New York",
pages = "175--179"
}

@ARTICLE{Kimble,
author = "H. Kimble", 
title="The quantum internet",
year = "2008",
journal = "Nature",
volume= "453",
issue="",
pages = "1023--1030",
doi = "10.1038/nature07127"
}

@ARTICLE{QIntenet-Pirandola,
author = "Pirandola, S. and Braunstein, S.", 
title="Physics: Unite to build a quantum Internet",
year = "2016",
journal = "Nature",
volume= "532",
issue="",
pages = "169--171",
doi = "10.1038/532169a"
}

@ARTICLE{QIntenet-Simon,
author = "C. Simon", 
title="Towards a global quantum network",
year = "2017",
journal = "Nature Photon.",
volume= "11",
issue="",
pages = "678--680",
doi = "10.1038/s41566-017-0032-0"
}

@ARTICLE{QIntenet-Wehner,
author = "Stephanie Wehner and David Elkouss and Ronald Hanson", 
title="Quantum internet: A vision
for the road ahead",
year = "2018",
journal = "Nature Photon.",
volume= "11",
issue="6412",
pages = "eaam9288",
doi = "10.1126/science.aam9288"
}

@book{QInternet-book,
	author    = {P. P. Rohde}, 
	title     = {The Quantum Internet. The second Quantum Revolution}, 	
	address   = {Cambridge},
        edition = {},
	publisher = {Cambridge University Press},
	year      = {2021}
}

@misc{WildeBook,
      title={Principles of Quantum Communication Theory: A Modern Approach}, 
      author={Sumeet Khatri and Mark M. Wilde},
      year={2024},
      eprint={2011.04672},
      archivePrefix={arXiv},
      primaryClass={quant-ph},
      url={https://arxiv.org/abs/2011.04672}, 
}

@misc{ChristandlThesis,
      title={The Structure of Bipartite Quantum States - Insights from Group Theory and Cryptography}, 
      author={Matthias Christandl},
      year={2006},
      eprint={quant-ph/0604183},
      archivePrefix={arXiv},
      primaryClass={quant-ph},
      url={https://arxiv.org/abs/quant-ph/0604183}, 
}

@book{HolevoBook,
	author    = {Holevo, A.S.}, 
	title     = {Quantum Systems, Channels, Information.
A Mathematical Introduction}, 	
	address   = {Berlin, Boston},
        edition = {},
	publisher = {De Gruyter},
	year      = {2013}
}

@book{CsiszarKornerBook,
	author    = {I. Csisz\'ar and J. K\"orner}, 
	title     = {Information Theory: Coding Theorems for Discrete Memoryless Systems}, 	
	address   = {Cambridge},
        edition = {2nd},
	publisher = {Cambridge University Press},
	year      = {2011}
}

@article{AudenaertEisert2011,
    author = {Audenaert, Koenraad M. R. and Eisert, Jens},
    title = {Continuity bounds on the quantum relative entropy -- {II}},
    journal = {J. Math. Phys.},
    volume = {52},
    number = {11},
    pages = {112201},
    year = {2011},
    month = {11},
    issn = {0022-2488},
    doi = {10.1063/1.3657929},
    url = {https://doi.org/10.1063/1.3657929},
}

@article{VedralPlenio98,
  title = {Entanglement measures and purification procedures},
  author = {Vedral, V. and Plenio, M. B.},
  journal = {Phys. Rev. A},
  volume = {57},
  issue = {3},
  pages = {1619--1633},
  numpages = {0},
  year = {1998},
  month = {Mar},
  publisher = {American Physical Society},
  doi = {10.1103/PhysRevA.57.1619},
  url = {https://link.aps.org/doi/10.1103/PhysRevA.57.1619}
}

@article{DonaldHoro99,
title = {Continuity of relative entropy of entanglement},
journal = {Phys. Lett. A},
volume = {264},
number = {4},
pages = {257-260},
year = {1999},
issn = {0375-9601},
doi = {https://doi.org/10.1016/S0375-9601(99)00813-0},
url = {https://www.sciencedirect.com/science/article/pii/S0375960199008130},
author = {Matthew J. Donald and Micha{\l} Horodecki},
}

@article{MultiSqEntWilde,
  author={Kaushik P. Seshadreesan and Masahiro Takeoka and Mark M. Wilde},
  title={Bounds on Entanglement Distillation and Secret Key Agreement for Quantum Broadcast Channels},
  journal={IEEE Trans. Inf. Theory},
  volume={62},
  number={5},
  pages={2849--2866},
  year={2016},
  doi={10.1109/TIT.2016.2544803}
}

@article{MultiSqEntHoro,
  author={Dong Yang and Karol Horodecki and Michal Horodecki and Pawel Horodecki and Jonathan Oppenheim and Wei Song},
  title={Squashed Entanglement for Multipartite States and Entanglement Measures Based on the Mixed Convex Roof},
  journal={IEEE Trans. Inf. Theory},
  volume={55},
  number={7},
  pages={3375--3387},
  year={2009},
  doi={10.1109/TIT.2009.2021373}
}

@misc{QInternet2025,
      title={The Quantum Internet (Technical Version)}, 
      author={Peter P. Rohde and Zixin Huang and Yingkai Ouyang and He-Liang Huang and Zu-En Su and Simon Devitt and Rohit Ramakrishnan and Atul Mantri and Si-Hui Tan and Nana Liu and Scott Harrison and Chandrashekar Radhakrishnan and Gavin K. Brennen and Ben Q. Baragiola and Jonathan P. Dowling and Tim Byrnes and William J. Munro},
      year={2025},
      eprint={2501.12107},
      archivePrefix={arXiv},
      primaryClass={quant-ph},
      url={https://arxiv.org/abs/2501.12107}, 
}

@misc{QInternet2025Tech,
      title={Quantum Internet: Technologies, Protocols, and Research Challenges}, 
      author={Vinay Kumar and Claudio Cicconetti and Marco Conti and Andrea Passarella},
      year={2025},
      eprint={2502.01653},
      archivePrefix={arXiv},
      primaryClass={quant-ph},
      url={https://arxiv.org/abs/2502.01653}, 
}

@misc{QPhotNet2025,
      title={A Large-Scale Reconfigurable Multiplexed Quantum Photonic Network}, 
      author={Natalia Herrera Valencia and Annameng Ma and Suraj Goel and Saroch Leedumrongwatthanakun and Francesco Graffitti and Alessandro Fedrizzi and Will McCutcheon and Mehul Malik},
      year={2025},
      eprint={2501.07272},
      archivePrefix={arXiv},
      primaryClass={quant-ph},
      url={https://arxiv.org/abs/2501.07272}, 
}

@misc{TrushSpanTrees,
      title={Spanning-tree-packing protocol for conference key propagation in quantum networks}, 
      author={Anton Trushechkin and Hermann Kampermann and Dagmar Bru{\ss}},
      year={2025},
      eprint={},
      archivePrefix={},
      primaryClass={quant-ph},
      url={}, 
}

@article{Shi2013MultiPartyQKA,
  author  = {Shi, Runhua and Zhong, Hong},
  title   = {Multi-party quantum key agreement with Bell states and Bell measurements},
  journal = {Quantum Information Processing},
  year    = {2013},
  volume  = {12},
  number  = {2},
  pages   = {921--932},
  doi     = {10.1007/s11128-012-0443-2},
  url     = {https://doi.org/10.1007/s11128-012-0443-2}
}

@article{Shukla2014BellQKA,
  author  = {Shukla, Chitra and Alam, Nasir and Pathak, Anirban},
  title   = {Protocols of quantum key agreement solely using Bell states and Bell measurement},
  journal = {Quantum Information Processing},
  year    = {2014},
  volume  = {13},
  number  = {11},
  pages   = {2391--2405},
  doi     = {10.1007/s11128-014-0784-0},
  url     = {https://doi.org/10.1007/s11128-014-0784-0}
}

@article{Murta2020Review,
  author  = {Murta, Gl\'aucia and Grasselli, Federico and Kampermann, Hermann and Bru\ss, Dagmar},
  title   = {Quantum Conference Key Agreement: A Review},
  journal = {Advanced Quantum Technologies},
  year    = {2020},
  volume  = {3},
  number  = {11},
  pages   = {2000025},
  doi     = {10.1002/qute.202000025},
  url     = {https://doi.org/10.1002/qute.202000025}
}

@article{Pickston2023NetworkQCKA,
  author  = {Pickston, Alexander and Ho, Joseph and Ulibarrena, Andr{\'e}s and Grasselli, Federico and Proietti, Massimiliano and Morrison, Christopher~L. and Barrow, Peter and Graffitti, Francesco and Fedrizzi, Alessandro},
  title   = {Conference key agreement in a quantum network},
  journal = {npj Quantum Information},
  year    = {2023},
  volume  = {9},
  pages   = {82},
  doi     = {10.1038/s41534-023-00750-4},
  url     = {https://doi.org/10.1038/s41534-023-00750-4}
}

@article{Das2021,
  author  = {Das, Siddhartha and B{\"a}uml, Stefan and Winczewski, Marek and Horodecki, Karol},
  title   = {Universal Limitations on Quantum Key Distribution over a Network},
  journal = {Physical Review X},
  year    = {2021},
  volume  = {11},
  number  = {4},
  pages   = {041016},
  doi     = {10.1103/PhysRevX.11.041016},
  url     = {https://doi.org/10.1103/PhysRevX.11.041016}
}

@article{Devetak2005DWRate,
  author  = {Devetak, Igor and Winter, Andreas},
  title   = {Distillation of Secret Key and Entanglement from Quantum States},
  journal = {Proceedings of the Royal Society A},
  year    = {2005},
  volume  = {461},
  number  = {2053},
  pages   = {207--235},
  doi     = {10.1098/rspa.2004.1372},
  url     = {https://doi.org/10.1098/rspa.2004.1372}
}

@article{augusiak2009multipartite,
  title={Multipartite secret key distillation and bound entanglement},
  author={Augusiak, Remigiusz and Horodecki, Pawe{\l}},
  journal={Physical Review A—Atomic, Molecular, and Optical Physics},
  volume={80},
  number={4},
  pages={042307},
  year={2009},
  publisher={APS}
}

@article{Navascues2020_GNME,
  author  = {Navascu{\'e}s, Miguel and Wolfe, Elie and Rosset, Denis and Pozas-Kerstjens, Alejandro},
  title   = {Genuine Network Multipartite Entanglement},
  journal = {Physical Review Letters},
  year    = {2020},
  volume  = {125},
  pages   = {240505},
  doi     = {10.1103/PhysRevLett.125.240505},
  url     = {https://doi.org/10.1103/PhysRevLett.125.240505}
}

@article{Grasselli2021AnonymousCKA,
  author  = {Grasselli, Federico and Murta, Gláucia and de Jong, Jarn and Hahn, Frederik and Bruß, Dagmar and Kampermann, Hermann and Pappa, Anna},
  title   = {Secure Anonymous Conferencing in Quantum Networks},
  journal = {arXiv preprint},
  year    = {2021},
  eprint  = {2111.05363},
  eprinttype = {arXiv},
  url     = {https://arxiv.org/abs/2111.05363}
}

@article{inflation2,
  title = {Quantum Inflation: A General Approach to Quantum Causal Compatibility},
  author = {Wolfe, Elie and Pozas-Kerstjens, Alejandro and Grinberg, Matan and Rosset, Denis and Ac\'{\i}n, Antonio and Navascu\'es, Miguel},
  journal = {Phys. Rev. X},
  volume = {11},
  issue = {2},
  pages = {021043},
  numpages = {24},
  year = {2021},
  month = {May},
  publisher = {American Physical Society},
  doi = {10.1103/PhysRevX.11.021043},
  url = {https://link.aps.org/doi/10.1103/PhysRevX.11.021043}
}

@article{infaltion1,
  author       = {Elie Wolfe and Robert W. Spekkens and Tobias Fritz},
  title        = {The Inflation Technique for Causal Inference with Latent Variables},
  journal      = {Journal of Causal Inference},
  year         = {2019},
  volume       = {7},
  number       = {2},
  pages        = {20170020},
  doi          = {10.1515/jci-2017-0020},
  url          = {https://doi.org/10.1515/jci-2017-0020},
  note         = {Article ID 20170020}
}

@article{chaves2015,
  title={Information--theoretic implications of quantum causal structures},
  author={Chaves, Rafael and Majenz, Christian and Gross, David},
  journal={Nature communications},
  volume={6},
  number={1},
  pages={5766},
  year={2015},
  publisher={Nature Publishing Group UK London}
}

@article{Buscemi2004,
  title = {All Entangled Quantum States Are Nonlocal},
  author = {Buscemi, Francesco},
  journal = {Phys. Rev. Lett.},
  volume = {108},
  issue = {20},
  pages = {200401},
  numpages = {5},
  year = {2012},
  month = {May},
  publisher = {American Physical Society},
  doi = {10.1103/PhysRevLett.108.200401},
  url = {https://link.aps.org/doi/10.1103/PhysRevLett.108.200401}
}

@article{guehne2005,
  title = {Entanglement detection in the stabilizer formalism},
  author = {T\'oth, G\'eza and G\"uhne, Otfried},
  journal = {Phys. Rev. A},
  volume = {72},
  issue = {2},
  pages = {022340},
  numpages = {14},
  year = {2005},
  month = {Aug},
  publisher = {American Physical Society},
  doi = {10.1103/PhysRevA.72.022340},
  url = {https://link.aps.org/doi/10.1103/PhysRevA.72.022340}
}

@article{PerfectOmni,
author = {Nitinawarat, Sirin and Narayan, Prakash},
title = {Perfect omniscience, perfect secrecy, and Steiner tree packing},
year = {2010},
issue_date = {December 2010},
publisher = {IEEE Press},
volume = {56},
number = {12},
issn = {0018-9448},
journal = {IEEE Trans. Inf. Theor.},
month = dec,
pages = {6490–6500},
numpages = {11},
url = {https://doi.org/10.1109/TIT.2010.2081450},
doi= {10.1109/TIT.2010.2081450}
}

@article{Wooltorton,
  title = {Genuine Multipartite Entanglement is Not Necessary for Standard Device-Independent Conference Key Agreement},
  author = {Wooltorton, Lewis and Brown, Peter and Colbeck, Roger},
  journal = {Phys. Rev. Lett.},
  volume = {135},
  issue = {22},
  pages = {220803},
  numpages = {7},
  year = {2025},
  month = {Nov},
  publisher = {American Physical Society},
  doi = {10.1103/v4s8-3zl5},
  url = {https://link.aps.org/doi/10.1103/v4s8-3zl5}
}

@ARTICLE{PIN2010,
  author={Nitinawarat, Sirin and Ye, Chunxuan and Barg, Alexander and Narayan, Prakash and Reznik, Alex},
  journal={IEEE Trans. Inf. Theory}, 
  title={Secret Key Generation for a Pairwise Independent Network Model}, 
  year={2010},
  volume={56},
  number={12},
  pages={6482-6489},
  doi={10.1109/TIT.2010.2081210}
}

@article{Grasselli2018,
doi = {10.1088/1367-2630/aaec34},
url = {https://doi.org/10.1088/1367-2630/aaec34},
year = {2018},
month = {nov},
publisher = {IOP Publishing},
volume = {20},
number = {11},
pages = {113014},
author = {Grasselli, Federico and Kampermann, Hermann and Bruß, Dagmar},
title = {Finite-key effects in multipartite quantum key distribution protocols},
journal = {New Journal of Physics}
}

@article{KoashiWinter2004,
  title = {Monogamy of quantum entanglement and other correlations},
  author = {Koashi, Masato and Winter, Andreas},
  journal = {Phys. Rev. A},
  volume = {69},
  issue = {2},
  pages = {022309},
  numpages = {6},
  year = {2004},
  month = {Feb},
  publisher = {American Physical Society},
  doi = {10.1103/PhysRevA.69.022309},
  url = {https://link.aps.org/doi/10.1103/PhysRevA.69.022309}
}

\end{document}